\theoremstyle{plain}
\newtheorem{theorem}{Theorem}
\newtheorem{lemma}{Lemma}
\newtheorem{observation}{Observation}
\newtheorem{corollary}{Corollary}
\theoremstyle{definition}
\newtheorem{definition}{Definition}
\newtheorem{example}{Example}
\theoremstyle{remark}
\newtheorem*{conjecture}{Conjecture}
\newtheorem{property}{Property}
\newtheorem{remark}{Remark}
\newcommand{\Nat}{\mathbb{N}}
\newcommand{\Int}{\mathbb{Z}}
\newcommand{\pos}{\mathrm{pos}}
\newcommand{\Run}{\mathrm{Run}}
\newcommand{\A}{\mathcal{A}}
\newcommand{\Q}{{}Q{}}
\renewcommand{\L}{\mathcal{L}}
\renewcommand{\rho}{\varrho}
\newcommand{\reset}{\mathord{\circlearrowleft}}
\newcommand{\inc}{\mathrm{inc}}
\newcommand{\dec}{\mathrm{dec}}
\newcommand{\id}{\textsc{id}}
\newcommand*\rk{\mathrm{rk}}
\newcommand*\size{\mathrm{size}}
\newcommand*\height{\mathrm{ht}}
\newcommand*\sub{\mathrm{sub}}
\newcommand*{\gentree}{\mathrm{tree}}
\newcommand*{\stateseq}{\mathrm{stateseq}}
\newcommand*{\statepos}{\mathrm{statepos}}
\newcommand{\compA}[1]{\mathrm{comp}_\A(#1)}
\newcommand*\lex{\leq_{\mathrm{lex}}}
\newcommand\xqed[1]{%
  \leavevmode\unskip\penalty9999 \hbox{}\nobreak\hfill
  \quad\hbox{#1}}
\newcommand{\exqed}{\xqed{$\triangleleft$}}
\DeclarePairedDelimiter\abs{\lvert}{\rvert}%
\tikzset{
  inlinesf/.style={
    baseline={([yshift=-0.6ex]current bounding box.center)}
  , level distance=20pt
  , text height=1.5ex
  , text depth=0.75ex
  , sibling distance=20pt
  , inner sep=1.5pt
  }
}
\title{Non-Global Parikh Tree Automata}
\author{Luisa Herrmann
\institute{Computational Logic Group, TU Dresden, Germany}
\institute{ScaDS.AI Center for Scalable Data Analytics and Artificial Intelligence\\ Dresden/Leipzig, Germany}
\email{luisa.herrmann@tu-dresden.de}
\and Johannes Osterholzer \institute{secunet Security Networks AG, Germany}\email{johannes.osterholzer@gmail.com}
}
\begin{document}
\maketitle

\begin{abstract}
Parikh (tree) automata are an expressive and yet computationally well-behaved extension of finite automata -- they allow to increment a number of counters during their computations, which are finally tested by a semilinear constraint. In this work, we introduce and investigate a new perspective on Parikh tree automata (PTA): instead of testing one counter configuration that results from the whole input tree, we implement a non-global automaton model. Here, we copy and distribute the current configuration at each node to all its children, incrementing the counters pathwise, and check the arithmetic constraint at each leaf. We obtain that the classes of tree languages recognizable by global PTA and non-global PTA are incomparable. In contrast to global PTA, the non-emptiness problem is undecidable for non-global PTA if we allow the automata to work with at least three counters, whereas the membership problem stays decidable. However, for a restriction of the model, where counter configurations are passed in a linear fashion to at most one child node, we can prove decidability of the non-emptiness problem.
\end{abstract}

\section{Introduction}

Finite automata are one of the most fundamental computation models in theoretical computer science and have been generalized to many structures that go beyond words, such as trees \cite{FetWri68}. However, they are not sufficient when arithmetic properties (such as two symbols occurring equally often) have to be ensured. For this reason, there are numerous approaches to extending automata with counting mechanisms. In the area of tree automata, although less studied than extensions of word automata, there are (among others) two approaches that come into consideration: On the one hand, there are \emph{pushdown tree automata} \cite{Gue81}, which extend pushdown automata to trees and thus recognize context-free tree languages, as well as their restriction \emph{counter tree automata}. And on the other hand, Parikh tree automata \cite{KlaRue02,Kla04} have been considered: during their computations, they allow to increment a number of counters in each step. These counters are finally tested to satisfy a semilinear constraint. The calculation principles for the counting mechanisms work orthogonally in both approaches -- while pushdown tree automata split their computations at each node and execute them pathwise, Parikh tree automata allow a global view: their counters are incremented over the whole input tree before their membership in a semilinear set is tested. Thus, in the remaining work we will refer to this model as \emph{global Parikh tree automata} (GPTA).

One motivation for the development and investigation of Parikh automata in recent years is the specification and verification of systems that fall outside the scope of regular languages. For such applications, tree automata are also interesting, as they are more suitable to model non-determinism and parallel processes than word automata. However, we think that a non-global view would be interesting for this case in particular: with GPTA, requirements such as "the same arithmetic property applies in every alternative path" cannot be modeled.

For this reason and inspired by the computation strategy of pushdown tree automata, we introduce here an alternative, non-global definition of Parikh tree automata (PTA): we define a model which copies and distributes the current counter configuration at each node to all its children, thus increments the counters pathwise, and finally checks at each leaf node whether the obtained configuration is contained in a semilinear set. In this way, PTAs are able to test arithmetic properties of tree paths.

\paragraph{Contributions} In this work, we start the investigation of non-global PTA, especially from the perspective of their expressiveness and decidability:
\begin{itemize}
    \item We generalize GPTA, which have so far only been considered for complete binary trees, to trees over arbitrary ranked alphabets and prove an exchange lemma (Lemma \ref{gpta-exchange}): This lemma, originally shown for Parikh word automata \cite[Lemma 1]{CadFM11}, states that certain parts in computations of GPTA can be rearranged. Thus, it can be used to show the limits of their expressive power.
    \item The exchange lemma is used to show that there are languages that are recognized by non-global PTA but not by GPTA. The converse of this statement is also shown, and thus we obtain that the language classes of these two models are incomparable (Theorem \ref{thm:uncomp}).
    \item We prove that, in contrast to GPTA, non-emptiness is undecidable for PTA with at least three counters (Theorem \ref{thm:undecidable}). This follows from a simulation of the computations of two counter machines. On the other hand, membership is decidable for arbitrary PTA (Theorem \ref{thm:member}).
    \item We introduce a restriction on the computation mechanism for PTA: \emph{linear} PTA may at each node only pass the current counter configuration to one child tree, the computations in all other child trees start again with all counters zero. With this restriction, we can limit the number of such "non-reset" paths that must at least occur if the language of a linear PTA is non-empty (Lemma \ref{lem:spinal-bounded}). Thereby, we can show that non-emptiness becomes decidable (Theorem \ref{thm:lin-empt}).
\end{itemize}

\paragraph{Related work} Since their introduction in \cite{KlaRue02,KlaRue03}, Parikh automata have been studied in many works from a variety of perspectives, cf.\ for example \cite{CadFM11,CadGhoPer23}; recently there have also been extensions for infinite words \cite{GuhaJL022,Gro24} and infinite trees \cite{HPR24}. It is known that Parikh automata correspond to a special form of \emph{vector addition systems with states} (VASS) over integers, so-called $\Int$-VASS \cite{HaaHal14}. For VASS, in \cite{CouSch14} a definition for alternation was provided by using branching -- in this sense, PTA could be seen as a formulation of \emph{alternating $\Int$-VASS}.

The idea of limiting the counter flow in a computation to linear paths originates from \emph{linear pushdown tree automata} \cite{FujiyoshiK00} and has later also been extended to \emph{tree automata with storage} \cite{Her21}.

\section{Preliminaries}

We denote by $\Nat$ the set of \emph{natural numbers} including 0 and set $[n]=\{1,\ldots,n\}$ for each $n\in\Nat$. Given a finite set $A$, we denote its \emph{cardinality}, i.e., the number of its elements, by $|A|$. For each $k\in \Nat$ and $a_1,\ldots,a_k\in A$, we call $w=a_1\ldots a_k$ a \emph{word over $A$} and say that its \emph{length} is $k$. We let $A^k$ be the set of all words over $A$ of length $k$ and set $A^*=\bigcup_{n\in \Nat}A^k$. Given some $w\in A^*$, we refer to its length by $|w|$ and the word of length 0 will be denoted by $\varepsilon$. We let $\sqsubseteq$ denote the \emph{prefix order} on $A$: for words $w_1,w_2\in A$ we have $w_1\sqsubseteq w_2$ if $w_2 = w_1u$ for some $u\in A^*$. The \emph{lexicographic order} on $\Nat^*$ will be denoted by $\lex$, and is defined for every \(u\), \(v \in \Nat^*\) such that \(u \lex v\) whenever \emph{(i)}~either \(u \sqsubseteq v\), or \emph{(ii)}~there are \(x\), \(y\), \(z \in \Nat^*\) and \(n\), \(m \in \Nat\) such that \(u = xny\), \(v = xmz\) and \(n < m\).

\paragraph{\textbf{Alphabets, trees, and tree languages.}} A \emph{ranked set} is a tuple $(\Sigma,\rk)$ where $\Sigma$ is a set (its elements called \emph{symbols} or \emph{labels}) and $\rk\colon\Sigma\to\Nat$ is a function assigning to each symbol in $\Sigma$ a natural number, its \emph{rank}. We often assume $\rk$ implicitly and only write $\Sigma$ instead of $(\Sigma,\rk)$.
For each $n\in\Nat$, by $\Sigma^{(n)}$ we mean $\rk^{-1}(n)$ and we write $\sigma^{(n)}$ in order to say that $\sigma\in\Sigma^{(n)}$.
We say that a ranked set \((\Sigma,\rk)\) is a \emph{ranked alphabet} if the set \(\Sigma\) is finite.\footnote{Most tree languages in this paper will have labels from some finite ranked alphabet. However, we have to allow infinite label sets for one definition.}

Now let $\Sigma$ be a ranked set and $H$ a set. The set $T_\Sigma(H)$ of \emph{trees} (over $\Sigma$ and indexed by $H$) is defined to be the smallest set $T$ such that \emph{(i)} $H\subseteq T$ and \emph{(ii)} for each $n\in\Nat$, $\sigma\in\Sigma^{(n)}$, and $\xi_1,\ldots,\xi_n\in T$ we have $\sigma(\xi_1,\ldots,\xi_n)\in T$. If $H=\emptyset$, we simply write $T_\Sigma$ instead of $T_\Sigma(H)$. As usual, we denote the tree $\alpha()$ by $\alpha$ for each $\alpha\in\Sigma^{(0)}$ and we often write monadic trees of the form $\gamma_1(\gamma_2(\ldots\gamma_k(\#)\ldots)$, where $\gamma_1,\ldots,\gamma_k\in\Sigma^{(1)}$, as words $\gamma_1\ldots\gamma_k\#$. Each subset $L\subseteq T_\Sigma$ is called a \emph{tree language}.

Let $\xi$, $\zeta\in T_\Sigma(H)$. We let $\pos(\xi)\subseteq\Nat^*$ denote the \emph{set of positions} of $\xi$, defined in the usual way: for every $\alpha\in\Sigma^{(0)}\cup H$ we let $\pos(\alpha)=\{\varepsilon\}$ and for every $n\geq 1$, $\sigma\in\Sigma^{(n)}$, and $\xi_1,\ldots,\xi_n\in T_\Sigma(H)$ we let $\pos(\sigma(\xi_1,\ldots,\xi_n))=\{\varepsilon\}\cup\{i\rho\mid i\in[n],\rho\in\pos(\xi_i)\}$. Furthermore, $|\xi|=|\pos(\xi)|$ stands for the \emph{size} of $\xi$, and, given a position $\rho\in\pos(\xi)$, we denote by $\xi(\rho)$ the \emph{label of $\xi$ at position $\rho$} and by $\xi_{|\rho}$ the \emph{subtree of $\xi$ at position $\rho$}, respectively.
Let \(\xi[\zeta]_\rho\) designate the tree that results from \(\xi\) by replacing the subtree rooted at \(\rho\) by \(\zeta\).
We let $\height(\xi) = \max \{ |\rho| \mid \rho \in \pos(\xi) \}$ and $\sub(\xi)=\{\xi_{|\rho}\mid\rho\in\pos(\xi)\}$. Given positions $\rho_1,\rho_2\in\pos(\xi)$ we say that the subtrees $\xi_{|\rho_1}$  and $\xi_{|\rho_2}$ are \emph{independent} if $\rho_1\not\sqsubseteq\rho_2$ and $\rho_2\not\sqsubseteq\rho_1$.

A \emph{path} $\pi$ is a sequence $\pi=\rho_1\ldots\rho_n$ of positions $\rho_1,\ldots,\rho_n\in\pos(\xi)$ such that for each $i\in[n-1]$ we have $\rho_{i+1}=\rho_i k$ for some $k\in [\rk(\xi(\rho_i))]$. The \emph{path word} of $\pi$ is given by $\xi(\pi)=\xi(\rho_1)\ldots\xi(\rho_n)$. We say that $\pi$ is a \emph{complete path} (or c-path) if $\rho_1=\varepsilon$ and $\xi(\rho_n)\in\Sigma^{(0)}$; the set of all complete paths of $\xi$ is denoted by $\mathrm{paths}(\xi)$.

\begin{example}\label{Ex1}
Consider the ranked alphabet $\Sigma=\{\sigma^{(2)},\gamma^{1},\alpha^{(0)}\}$ as well as the set $H=\{u\}$. Then $\sigma(\alpha,\alpha)$ is a tree in $T_\Sigma$ and $\xi=\sigma(\sigma(\gamma(\alpha), \alpha),\gamma(\gamma(u)))$ is a tree in $T_\Sigma(H)$.
As mentioned above, we will also sometimes write $\sigma(\sigma(\gamma\alpha, \alpha),\gamma\gamma u)$ for \(\xi\).
We have $\pos(\xi)=\{\varepsilon,1,11,111,12, 2,21,211\}$, $\xi(\varepsilon)=\sigma$ and $\xi(21)=\gamma$, and $\xi_{|2}=\gamma(\gamma(u))$.
The trees in this example can be represented graphically as
\[
\begin{tikzpicture}[inlinesf,sibling distance=2.5em]
  \node (t) {\(\sigma\)}
  child[] { node {\(\alpha\)} }
  child[] { node {\(\alpha\)}};
 \end{tikzpicture}
\qquad\qquad\text{and}\qquad\qquad
\begin{tikzpicture}[inlinesf,sibling distance=2.5em]
  \node (t) {\(\sigma\)}
  child[] { node {\(\sigma\)}
    child { node {\(\gamma\)}
      child { node {\(\alpha\)}}}
    child { node {\(\alpha\)}}}
  child[] { node {\(\gamma\)}
    child { node {\(\gamma\)}
    child { node {\(u\)}}}
 };
 \end{tikzpicture}
\quad\text{,} 
 \]
respectively.\exqed
\end{example}

\paragraph{\textbf{Contexts, spines, and composition.}}
Let $X=\{x_1,x_2,\ldots\}$ be a fixed set of \emph{variables} that is disjoint from every other set in this work and let $X_n=\{x_1,\ldots,x_n\}$. Now let $H$ be a set, $k\geq 1$ and $\xi\in T_\Sigma(H\cup X_k)$. We call $\xi$ a \emph{context} if \emph{(a)} there is exactly one $\rho_i\in\pos(\xi)$ (in the further denoted by $\pos_{x_i}(\xi)$) with $\xi(\rho_i)=x_i$ for each $i\in[k]$ and \emph{(b)} for each $i_1,i_2\in[k]$, if $i_1<i_2$, then $\pos_{x_{i_1}}(\xi)\leq_{\mathrm{lex}}\pos_{x_{i_2}}(\xi)$. The set of all such \emph{contexts over $\Sigma$ and $H$} will be denoted by $C_\Sigma(H,X_k)$ (or by $C_\Sigma(X_k)$ if $H=\emptyset$).

The \emph{composition} $\zeta\cdot\xi$ of a context $\zeta\in C_\Sigma(H,X_1)$ and a tree $\xi\in T_\Sigma(H\cup X)$ replaces $x_1$ in $\zeta$ by $\xi$. This operation can be transferred to arbitrary $k\geq 2$, trees $\xi\in T_\Sigma(H\cup X_k)$ and $\xi_1,\ldots,\xi_k\in T_\Sigma(H)$: we let $\xi[\xi_1,\ldots,\xi_k]$ stand for the tree $\zeta$ obtained from $\xi$ by replacing each occurrence of $x_i$ by $\xi_i$ for each $i\in[k]$.

Given a tree $\xi\in T_\Sigma$ and a path $\rho_1\ldots\rho_n$, we let the \emph{$(\rho_1,\rho_n)$-spine of $\xi$}, denoted by $\xi^{[\rho_1,\rho_n]}$, be the context $\zeta\in C_\Sigma(X_k)$ for some $k\in\Nat$ containing exactly the path $\rho_1\ldots\rho_n$, i.e., such that there are a context $\zeta'\in C_\Sigma(X_1)$ and trees $\xi_1,\ldots,\xi_k\in T_\Sigma$ with $\xi=\zeta'\cdot(\zeta[\xi_1,\ldots,\xi_k])$ and for each $\rho\in\Nat^*$ we have $\rho\in\pos(\zeta)\setminus\{\pos_{x_i}(\zeta)\mid i\in[k]\}$ if and only if $\rho_1\rho=\rho_j$ for some $j\in[n]$.

\begin{example}
Consider the positions $\rho_1=1$ and $\rho_2=11$ of the tree $\xi$ from Example~\ref{Ex1}. Then the $(\rho_1,\rho_2)$-spine of $\xi$ is $\xi^{[\rho_1,\rho_2]}=\sigma(\gamma(x_1),x_2)$, and we have
\(\xi=\sigma(x_1,\gamma \gamma u)\cdot(\sigma(\gamma(x_1),x_2) [\alpha, \alpha])\). \exqed
\end{example}

\paragraph{\textbf{Semilinear sets.}} 
Let $s\geq 1$. We denote by $\bm{0}_s$ the zero-vector $\bm{0}_s=(0,...,0)\in\Nat^s$ of \emph{dimension} $s$. If $s$ is clear from the context, we often only write $\bm{0}$.{}
A set $C\subseteq\Nat^s$, $s\geq 1$, is \emph{linear} if it is of the form
$C=\{d_0 + \textstyle\sum_{i\in[l]}m_i d_i \mid m_1,\ldots,m_l\in\Nat\}$
for some $l\in\Nat$ and vectors $d_0,\ldots,d_l\in\Nat^s$. Any finite union of linear sets is called \emph{semilinear}.

\begin{lemma}[{\cite[Theorem 1.2 and 1.3]{GinSpa66}}]\label{lemma:dec-C}
    Given a semilinear set $C\subseteq\Nat^s$ and a vector $d\in\Nat^s$, it is decidable whether $d\in C$.
\end{lemma}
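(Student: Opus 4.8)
The plan is to reduce the problem in two stages: first from semilinear sets to linear sets, and then from membership in a linear set to the feasibility of a system of linear Diophantine equations over $\Nat$, which is a classically decidable problem. Throughout I assume that the decision procedure receives an explicit representation of $C$ by base and period vectors, since that is how semilinear sets are specified.

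First I would exploit that $C$ is by definition a finite union $C=\bigcup_{k\in[p]}C_k$ of linear sets $C_1,\dots,C_p$, a representation of which is part of the finite input. Since $d\in C$ holds if and only if $d\in C_k$ for some $k\in[p]$, it suffices to give a decision procedure for membership in a \emph{single} linear set and then run it for each of the finitely many components, accepting if any of them does.

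So let $C_k=\{d_0+\sum_{i\in[l]}m_i d_i \mid m_1,\dots,m_l\in\Nat\}$ with given vectors $d_0,\dots,d_l\in\Nat^s$. Then $d\in C_k$ holds precisely when the vector equation $\sum_{i\in[l]}m_i d_i = d-d_0$ has a solution $(m_1,\dots,m_l)\in\Nat^l$. Read coordinatewise, this is a system of $s$ linear equations in the $l$ unknowns $m_i$ with natural-number coefficients. A cheap necessary check is that $d-d_0$ has no negative component (otherwise reject immediately); after that, the task is to decide whether the system admits a solution in $\Nat^l$.

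The remaining, and essential, step is to decide feasibility of this Diophantine system over $\Nat$. The naive difficulty is that the search space $\Nat^l$ is infinite, so the crux is to bound the search. The key fact I would invoke is the classical bound on minimal solutions of integer linear systems: if the system has any solution in $\Nat^l$, then it has one in which every $m_i$ is at most an effectively computable value $B$ depending only on $l$, $s$, and the largest entry occurring among the $d_i$ and $d-d_0$. Granting such a bound, the procedure simply enumerates the finite cube $\{0,\dots,B\}^l$ and tests each candidate. The main obstacle is therefore not either reduction but justifying this bound; I would either cite the standard integer-programming estimate on the size of minimal solutions, or route the argument through the existential fragment of Presburger arithmetic — the displayed system is directly expressible as an existential Presburger formula, whose satisfiability is decidable — thereby obtaining decidability even without constructing $B$ explicitly.
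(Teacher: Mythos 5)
Your argument is correct, but note that the paper does not prove this lemma at all: it is imported verbatim from Ginsburg and Spanier, whose route goes through the equivalence of semilinear sets with Presburger-definable sets and the decidability of Presburger arithmetic. Your reduction chain (union of linear sets, then a linear Diophantine system $\sum_{i\in[l]} m_i d_i = d - d_0$ over $\Nat$) is the standard direct proof and is sound. One simplification worth noticing: in this setting the ``essential step'' you flag as the main obstacle is essentially free, because all period vectors $d_i$ lie in $\Nat^s$. Discard the indices with $d_i = \bm{0}$ (they contribute nothing); for every remaining index, each unit of $m_i$ adds at least $1$ to some coordinate of the sum, so any solution satisfies $m_i \leq \max_j (d - d_0)_j$. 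Hence the finite cube to enumerate has side length $\max_j (d-d_0)_j$, and neither the general integer-programming bound on minimal solutions nor the detour through existential Presburger arithmetic is needed. With that observation your proof is complete and entirely elementary, which is arguably preferable to a bare citation when the statement is this self-contained.
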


\paragraph{Parikh string automata}

Let $s\geq 1$. A \emph{Parikh string automaton} of dimension $s$ ($s$-PA) is a tuple $\A=(Q,\Sigma,q_0,\Delta,F,C)$ where $Q$ is a finite set of states, $\Sigma$ is a (string) alphabet, $q_0\in Q$ (initial state), $F\subseteq Q$ (final states), $\Delta$ is a finite set of transitions of the form $(q,a,d,q')$ for $q,q'\in Q$, $a\in\Sigma$, $d\in\Nat^s$, and $C$ is a semi-linear set over $\Nat^s$. Let $w\in\Sigma^*$. A \emph{run} of $\A$ on $w$ is a sequence \[(p_0,a_1,d_1,p_1)(p_1,a_2,d_2,p_2)\ldots(p_{n-1},a_n,d_n,p_n)\]of transitions such that $p_0=q_0$, $a_1\ldots a_n=w$, $p_n\in F$, and $(d_1+\ldots+d_n)\in C$. The set of all runs of $\A$ on $w$ is denoted $\mathrm{Run}_\A(w)$ and the \emph{language recognized by $\A$} is the set $\L(\A)=\{w\in\Sigma^*\mid \mathrm{Run}_\A(w)\neq\emptyset\}$.

\begin{lemma}[{\cite[Property 6]{KlaRue03}}]\label{pa-dec}
    Given a Parikh string automaton $\A$, it is decidable whether $\L(\A)\neq\emptyset$.
\end{lemma}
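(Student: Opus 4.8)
The plan is to reduce non-emptiness of $\A$ to the emptiness of an effectively computable semilinear set. Write $\A=(Q,\Sigma,q_0,\Delta,F,C)$ of dimension $s$, and note that by definition $\L(\A)\neq\emptyset$ holds if and only if there is some accepting run whose accumulated counter vector lies in $C$. Accordingly, I would introduce the set
\[
S=\Bigl\{\,\textstyle\sum_{i=1}^{n} d_i \;\Bigm|\; (p_0,a_1,d_1,p_1)\cdots(p_{n-1},a_n,d_n,p_n)\ \text{is a run of }\A\text{ with }p_0=q_0,\ p_n\in F\,\Bigr\}\subseteq\Nat^s
\]
of all counter vectors \emph{attainable} along a path from $q_0$ to a final state, where the semilinear constraint is temporarily dropped. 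Then $\L(\A)\neq\emptyset$ if and only if $S\cap C\neq\emptyset$, so it suffices to show that $S$ is effectively semilinear and that emptiness of $S\cap C$ is decidable.

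To see that $S$ is effectively semilinear, I would regard the transition set $\Delta$ as a finite alphabet and consider the nondeterministic finite automaton $\A'$ with state set $Q$, initial state $q_0$, final states $F$, and a $t$-labelled edge from $q$ to $q'$ for every transition $t=(q,a,d,q')\in\Delta$. By construction the regular language $R\subseteq\Delta^*$ recognized by $\A'$ is exactly the set of accepting transition sequences of $\A$ with the constraint on $C$ ignored. By Parikh's theorem its Parikh image $\Psi(R)\subseteq\Nat^{\Delta}$ is semilinear, and a representation of $\Psi(R)$ can be computed from $\A'$. Finally $S$ is the image of $\Psi(R)$ under the linear map $\ell\colon\Nat^{\Delta}\to\Nat^s$, $\ell(\bm{n})=\sum_{t\in\Delta}\bm{n}_t\,d_t$, where $d_t\in\Nat^s$ is the counter vector of $t$; this map records how a given multiset of transitions contributes to the counters. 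Since semilinear sets are effectively closed under linear images, $S$ is effectively semilinear.

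It remains to decide whether $S\cap C$ is empty. Since semilinear sets are effectively closed under intersection (Ginsburg--Spanier), I can compute a representation $S\cap C=\bigcup_j L_j$ as a finite union of linear sets $L_j=\{d_0^{(j)}+\sum_i m_i d_i^{(j)}\mid m_i\in\Nat\}$. Each such $L_j$ contains its base vector $d_0^{(j)}$ and is therefore non-empty, so $S\cap C=\emptyset$ holds precisely when this union has no summands at all, a condition that can be read off the computed representation directly. This yields the claim. The only non-routine ingredients are the effectivity of Parikh's theorem and of the closure constructions for semilinear sets; both are classical, so the main work is simply to assemble them in the right order. An alternative, equally viable route would be to encode ``$S\cap C\neq\emptyset$'' as an existential Presburger sentence, using that both $S$ and $C$ are Presburger-definable, and to appeal to the decidability of Presburger arithmetic.
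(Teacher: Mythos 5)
Your argument is correct. Note, however, that the paper does not prove this lemma at all: it is imported by citation from Klaedtke and Rue\ss\ (Property~6 of \cite{KlaRue03}), so there is no in-paper proof to compare against. Your reduction --- take the NFA over the transition alphabet $\Delta$, apply Parikh's theorem to get an effectively semilinear set of transition multisets, push it through the linear map summing the counter vectors to obtain the reachability set $S$, intersect with $C$ using effective closure of semilinear sets, and observe that a linear set always contains its base vector --- is a complete and standard derivation of the cited fact, and is in the same spirit as the classical arguments for Parikh automata (semilinearity plus decidability of Presburger arithmetic). The only point worth making explicit is that the accumulated vector $\sum_i d_i$ of a run depends only on the Parikh image of the transition sequence, which is exactly what licenses writing $S=\ell(\Psi(R))$; you state this, so the proof is sound.
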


\section{Global Parikh Tree Automata}

Let us recall the definition of (global) Parikh tree automata from \cite{KlaRue02,Kla04}. Note that we use here a slight variation of the original version: In \cite{KlaRue02,Kla04}, only full binary trees were considered and, thus, the number of successor states in each transition was fixed to two (also for leaf nodes). In this paper, we extend Parikh tree automata to arbitrary ranked trees in the usual way -- it is not hard to see that for alphabets containing only binary and nullary symbols, both formalisms are equivalent.

\paragraph{\textbf{Extended Parikh map.}} 
Given a ranked alphabet $\Sigma$ and some finite $D\subseteq\Nat^s$ for $s\geq 1$, the automaton model works with symbols from $\Sigma\times D$. Thus, we use the \emph{projections} 
$\cdot_\Sigma:{\Sigma\times D}\to \Sigma$ with $(a,d)_\Sigma=a$ and
$\cdot_D:{\Sigma\times D}\to D$ with $(a,d)_D=d$, extended to trees in the obvious way.  
Moreover, the \emph{extended Parikh map} $\Psi\colon T_{\Sigma\times D}\to\Nat^s$ is defined for each tree $\xi\in T_{\Sigma\times D}$ by
$\Psi(\xi)=\sum_{\rho\in\pos(\xi)}(\xi(\rho))_D\,.$ 

\paragraph{\textbf{Global Parikh tree automata.}} Let $m\geq 1$. A \emph{global Parikh tree automaton} of dimension $m$ ($m$-GPTA) is a tuple $\A=(Q,\Sigma, D,q_0,\Delta,C)$ where $Q$ is a finite set of states, $\Sigma$ is a ranked alphabet, $D\subset \Nat^m$ is finite, $q_0\in Q$ is the initial state, $C\subseteq \Nat^m$ is a semilinear set, and $\Delta$ is a finite set of transitions of the form
\[q\to \langle\sigma,d\rangle(q_1,\ldots,q_n)\]
where $n\in\Nat$, $\sigma\in\Sigma^{(n)}$, $d\in D$, and $q,q_1,\ldots,q_n\in Q$.

Given a tree $\zeta\in T_{\Sigma\times D}$, a \emph{run of $\A$ on $\zeta$} is a mapping $r\colon\pos(\zeta)\to Q$ such that for each $\rho\in\pos(\zeta)$, $r(\rho)\to \zeta(\rho)(r(\rho1),\ldots,r(\rho n))$ with  $n=\rk(\zeta(\rho))$ is in $\Delta$. We say that a run $r$ is \emph{successful} if $r(\varepsilon)=q_0$ and $\Psi(\zeta)\in C$; we denote the set of all successful runs of $\A$ on $\zeta$ by $\Run_\A(\zeta)$. Then the \emph{language of $\A$}, denoted by $\L(\A)$ is the set $\L(\A)=\{\xi \in T_\Sigma\mid \exists \zeta\in T_{\Sigma\times D} \text{ with } (\zeta)_\Sigma=\xi \text{ and }\Run_\A(\zeta)\neq\emptyset\}$.

\subsection{Pumping-style Exchange Lemma for GPTA}

For Parikh automata, a classical pumping lemma that cuts out or iterates parts of a computation is not known -- missing or additional parts in a computation would change the extended Parikh image and thus affect acceptance. However, since the final counter configuration is a global result of the entire computation, parts of the computation can be rearranged without changing the extended Parikh image. This was shown in \cite[Lemma 1]{CadFM11} for the string case and is generalized here to the tree case. This result will be useful later to show that certain tree languages recognizable by non-global Parikh tree automata are not GPTA-recognizable. Note that a crucial part of the extension is that computation parts from independent subtrees are reordered. This allows us to distinguish path counting from global counting using the exchange lemma. Figure \ref{fig:exchange} is a graphical representation of the following lemma.

\begin{figure}
    \centering
    \includegraphics[width=\textwidth]{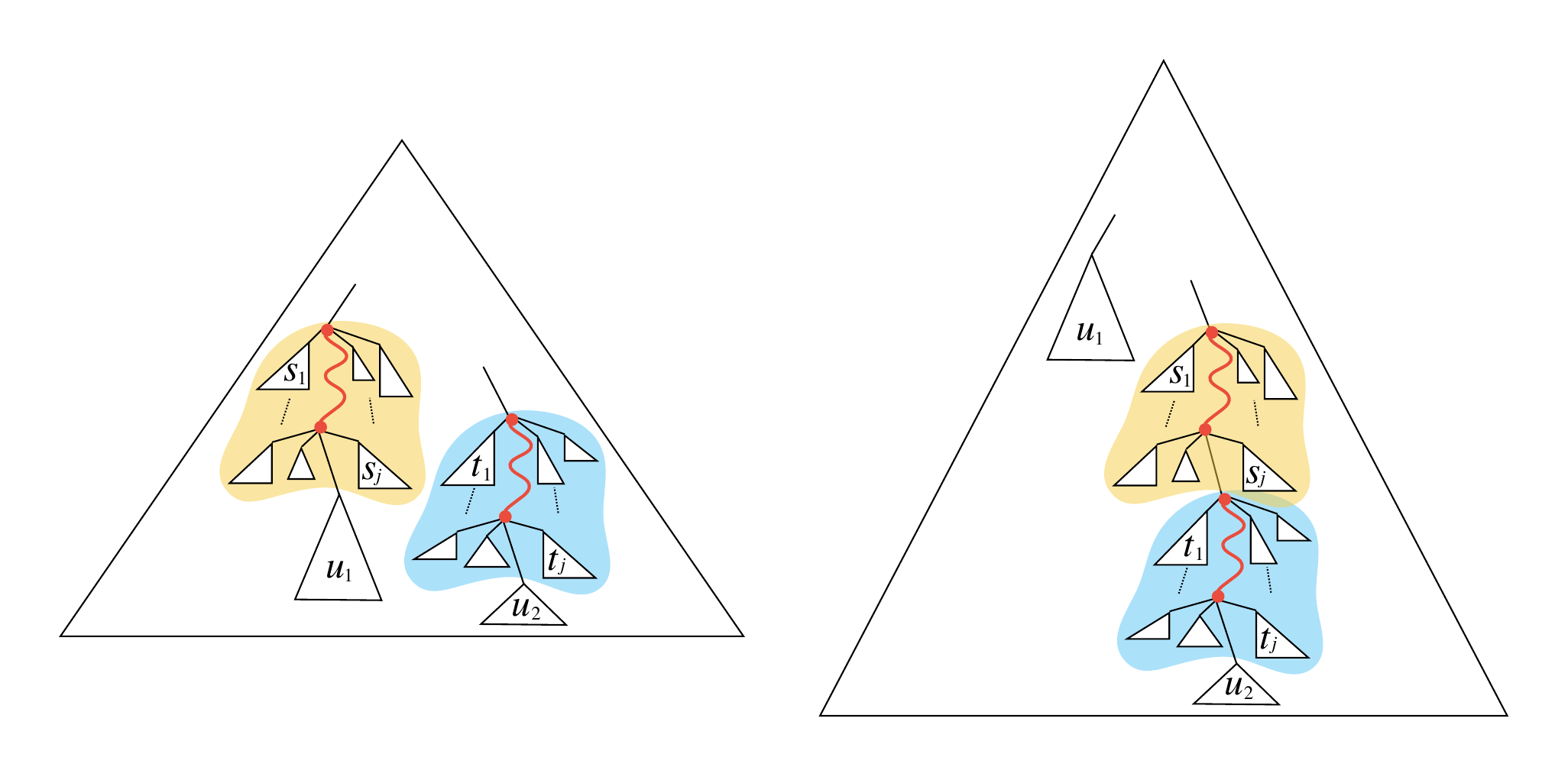}
    \caption{The tree $\xi$ divided as in Lemma \ref{gpta-exchange} (1.) and its reordering as in (2.) where the red spine corresponds to $\zeta_2$.}
    \label{fig:exchange}
\end{figure}

\begin{lemma}\label{gpta-exchange}
    Let $L$ be a GPTA-recognizable tree language. Then there exist constants $l,p>0$ such that for each tree $\xi \in L$ with at least $l$ pairwise independent subtrees of height at least $p$ there exists $k\geq 0$, contexts $\zeta_1\in C_\Sigma(X_{2}),\zeta_2\in C_\Sigma(X_{k+1})$ with $0<\height(\zeta_2)< p$, trees $s_1,\ldots,s_k$, $t_1,\ldots,t_k$, $u_1,u_2\in T_\Sigma$, and $j\in[k+1]$ such that
    \begin{enumerate}
        \item $\xi=\zeta_1[\zeta_2[s_1,\ldots, s_{j-1}, x_1,s_{j},\ldots,s_k]\cdot u_1,\zeta_2[t_1,\ldots, t_{j-1}, x_1,t_{j},\ldots,t_k]\cdot u_2]$,
        \item $\zeta_1[u_1,\zeta_2[s_1,\ldots, s_{j-1}, x_1,s_{j},\ldots,s_k]\cdot \zeta_2[t_1,\ldots, t_{j-1}, x_1,t_{j},\ldots,t_k]\cdot u_2]\in L$, and
        \item $\zeta_1[\zeta_2[s_1,\ldots, s_{j-1}, x_1,s_{j},\ldots,s_k]\cdot \zeta_2[t_1,\ldots, t_{j-1}, x_1,t_{j},\ldots,t_k]\cdot u_1,u_2]\in L$.
    \end{enumerate}
\end{lemma}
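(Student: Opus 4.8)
The plan is to exploit the defining feature of GPTA: acceptance depends only on the single global sum $\Psi(\zeta)=\sum_{\rho\in\pos(\zeta)}(\zeta(\rho))_D$, which is invariant under any rearrangement of the multiset of annotated nodes. Fix an $m$-GPTA $\A=(Q,\Sigma,D,q_0,\Delta,C)$ with $\L(\A)=L$. I would set $p=|Q|+1$ and let $l$ be one more than the number of pairs $(\zeta_2,q)$, where $q\in Q$ and $\zeta_2$ is a \emph{marked spine}, i.e.\ a context over $\Sigma$ of height $<p$ together with a choice of one distinguished hole. Since $\Sigma$ is finite and $p$ is fixed, there are finitely many such pairs, so $l$ is a well-defined constant depending only on $\A$ (hence on $L$).

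Now take $\xi\in L$ with pairwise independent subtrees $\xi_{|\rho^{(1)}},\ldots,\xi_{|\rho^{(l)}}$ of height $\geq p$, and fix an annotated $\zeta\in T_{\Sigma\times D}$ with $(\zeta)_\Sigma=\xi$ together with a successful run $r\in\Run_\A(\zeta)$. In each subtree, its height $\geq p=|Q|+1$ yields a path from $\rho^{(i)}$ whose first $|Q|+1$ positions carry $|Q|+1$ states, so by pigeonhole two of them, at positions $\mu^{(i)}\sqsubsetneq\nu^{(i)}$ with $0<|\nu^{(i)}|-|\mu^{(i)}|\leq|Q|<p$, carry the same state $q^{(i)}$ under $r$. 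Deleting $\xi_{|\nu^{(i)}}$ and all off-spine subtrees from $\xi_{|\mu^{(i)}}$ leaves a marked-spine context $\zeta^{(i)}$ of height $<p$ (its distinguished hole sitting at $\nu^{(i)}$) that forms a \emph{state loop}: both its root and its distinguished hole carry the one state $q^{(i)}$.

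Each subtree thus contributes a pair $(\zeta^{(i)},q^{(i)})$, and since $l$ exceeds the number of such pairs, two indices — giving subtrees $A$ and $B$ — yield the same shape $\zeta_2:=\zeta^{(A)}=\zeta^{(B)}$ and the same state $q:=q^{(A)}=q^{(B)}$. I would then read off the remaining data: $u_1$ and $u_2$ are the subtrees $\xi_{|\nu^{(A)}}$, $\xi_{|\nu^{(B)}}$ hanging below the two distinguished holes; $s_1,\ldots,s_k$ (resp.\ $t_1,\ldots,t_k$) are the side trees filling the non-distinguished holes of the loop in $A$ (resp.\ $B$); $j$ is the index of the distinguished hole; and $\zeta_1\in C_\Sigma(X_2)$ is obtained from $\xi$ by deleting the two loop-top subtrees $\xi_{|\mu^{(A)}}$ and $\xi_{|\mu^{(B)}}$. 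As these tops lie in distinct independent subtrees of the $l$ given ones, their positions are incomparable, so $\zeta_1$ is a genuine two-hole context and decomposition~(1.)\ holds.

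For~(2.)\ and~(3.)\ I would rearrange the annotated tree, keeping each annotated spine, side tree, and bottom tree together with its original sub-run. Each reordering only relocates whole annotated blocks, so the multiset of annotated nodes — and hence $\Psi$ — is unchanged, and $\Psi(\cdot)\in C$ still holds. It remains to check that the relocated blocks form a run, i.e.\ that states agree at every seam, and this is exactly where the loop property and the equality $q^{(A)}=q^{(B)}$ enter: both loop tops, both distinguished holes, and the root states of $u_1,u_2$ all equal $q$, so stacking $\zeta_2[s_1,\ldots,x_1,\ldots,s_k]\cdot\zeta_2[t_1,\ldots,x_1,\ldots,t_k]$ and swapping the two bottoms produces no state clash. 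The main obstacle, and the reason the statement is delicate, is precisely this simultaneous requirement: a single loop per subtree is not enough, since relocating the $A$-spine above the $B$-spine forces the two loops to share \emph{both} their label shape \emph{and} their state $q$. This is what dictates the two nested pigeonhole steps and fixes the constants $p$ and $l$.
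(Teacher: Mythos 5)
Your proof is correct and follows essentially the same route as the paper's: fix a successful run, extract from each of $l$ pairwise independent tall subtrees a short state loop, pigeonhole two of them onto identical data, and then relocate whole annotated blocks together with their sub-runs, which leaves $\Psi$ (and hence acceptance) unchanged while only the matching seam states need to be checked. The single difference is bookkeeping for the constant $l$: the paper pigeonholes on transition cycles in an auxiliary graph labeled by $\Delta\times[N]$, so the two spines share their entire transition sequence, whereas you pigeonhole on the coarser pairs of spine shape and endpoint state --- which also suffices, since only the spine labels and the four seam states enter the rearrangement argument.
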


\begin{proof}
    Let $\A=(Q,\Sigma, D,q_0,\Delta,C)$ be a GPTA with $\L(\A)=L$ and let $p=|Q|+1$. Further, let $N$ be the maximal rank of symbols in $\Sigma$.  In order to define $l$, we build from the transitions of $\A$ a graph $G$ labeled by elements of $\Delta\times N$ as follows: We let $G=(V,E)$ with $V=Q$ and $E\subseteq Q\times (\Delta\times [N]) \times Q$ such that $(q,\langle\tau,i\rangle,q')\in E$ if and only if $\tau=q\to \langle\sigma,d\rangle(q_1,\ldots,q_n)$, $i\leq n$, and $q_i=q'$. Now let $l'$ be the number of cycles in $G$, i.e., the number of sequences $(f_0,u_1,f_1)(f_1,u_2,f_2)...(f_{k-1},u_k,f_k)$ for $k\leq p$ such that $(f_{i-1},u_i,f_i)\in E$ for each $i\in[k]$, $f_0=f_k$, and there are no $i,j\in[k]$ such that $i\ne j$ and $f_i=f_j$. Then $l=l'+1$.

    Now consider $\xi\in L$ that fulfills the requirements of the statement. Then there exists a tree $t\in T_{\Sigma\times D}$ with $(t)_\Sigma=\xi$ and a successful run $r\in\Run_\A(t)$. By our requirement for $\xi$, also $t$ contains $l$ independent subtrees of height at least $p$. Note that because of their height, each of these subtrees contains a cycle.  By the definition of \(l\), we can apply the pigeonhole principle and obtain that there is a pair of paths that contain the same cycle: there has to be some $1\leq h\leq p$ and two paths $\rho^1_1\ldots\rho^1_h$ and $\rho^2_1\ldots\rho^2_h$ in independent subtrees of $t$ that induce transition cycles which coincide. Formally, for each $i\in[2]$, let
    \[w_i=(r(\rho^i_1),\tilde\tau(\rho^i_1),r(\rho^i_2))\ldots(r(\rho^i_{h-1}),\tilde\tau(\rho^i_{h-1}),r(\rho^i_{h})),\]
    where $\tilde\tau(\rho^i_j)=\langle r(\rho^i_j)\to t(\rho^i_j)(r(\rho^i_j1),\ldots,r(\rho^i_jn)),\mu\rangle$, and \(\mu\in\Nat\) such that $\rho^i_{j+1}=\rho^i_j\mu$. Both $w_1$ and $w_2$ are cycles in $G$, thus $r(\rho^i_1)=r(\rho^i_{h})$, and $w_1=w_2$.


    Now let $\zeta_1\in C_{\Sigma}(X_2)$ such that $\xi=\zeta_1[\xi_{|\rho^1_1},\xi_{|\rho^2_1}]$ and let $\zeta_2=\xi^{[\rho^{1}_1,\rho^{1}_{h-1}]}=\xi^{[\rho^{2}_1,\rho^{2}_{h-1}]}$. Clearly, there is some $k\in\Nat$, $j\in[k+1]$, and trees $s_1,\ldots,s_k,t_1,\ldots,t_k\in T_{\Sigma}$ such that $\xi_{|\rho^1_1},\xi_{|\rho^2_1}$ can be written as
    \[\xi_{|\rho^1_1}=\zeta_2[s_1,\ldots, s_{j-1}, x_1,s_{j},\ldots,s_k]\cdot \xi_{|\rho^{1}_h} \qquad \text{ and } \qquad \xi_{|\rho^2_1}=\zeta_2[t_1,\ldots, t_{j-1}, x_1,t_{j},\ldots,t_k]\cdot \xi_{|\rho^{2}_h}\,.\]
    By letting $u_1=\xi_{|\rho^{1}_h}$ and $u_2=\xi_{|\rho^{2}_h}$ we obtain
    \[\xi=\zeta_1[\zeta_2[s_1,\ldots, s_{j-1}, x_1,s_{j},\ldots,s_k]\cdot u_1,\zeta_2[t_1,\ldots, t_{j-1}, x_1,t_{j},\ldots,t_k]\cdot u_2]\]
    which corresponds to item (1.) of the statement. Note that we can subdivide $t$ in exactly the same building blocks as $\xi$: there are $\delta_1\in C_{\Sigma\times D}(X_2)$, $\delta_2=t^{[\rho^{1}_1,\rho^{1}_{h-1}]}$, and $\hat s_1,\ldots,\hat s_k,\hat t_1,\ldots,\hat t_k$ such that $\pos(\delta_1)=\pos(\zeta_1)$, $\pos(\hat s_i)=\pos(s_i)$, $\pos(\hat t_i)=\pos(t_i)$ for each $i\in[k]$, and we have
    \[t=\delta_1[\delta_2[\hat s_1,\ldots, \hat s_{j-1}, x_1,\hat s_{j},\ldots,\hat s_k]\cdot t_{|\rho^{1}_h},\delta_2[\hat t_1,\ldots, \hat t_{j-1}, x_1,\hat t_{j},\ldots,\hat t_k]\cdot t_{|\rho^{2}_h}]\,.\]

    For item (2.) we need to argue that the reordering \[\xi'=\zeta_1[u_1,\zeta_2[s_1,\ldots, s_{j-1}, x_1,s_{j},\ldots,s_k]\cdot \zeta_2[t_1,\ldots, t_{j-1}, x_1,t_{j},\ldots,t_k]\cdot u_2]\] of $\xi$ can be recognized by $\A$, too. To show this, we construct from $r$ a computation $r'$ on the corresponding reordering $t'$ of $t$ given by 
    \[t'=\delta_1[t_{|\rho^{1}_h},\delta_2[\hat s_1,\ldots, \hat s_{j-1}, x_1,\hat s_{j},\ldots,\hat s_k]\cdot\delta_2[\hat t_1,\ldots, \hat t_{j-1}, x_1,\hat t_{j},\ldots,\hat t_k]\cdot t_{|\rho^{2}_h}]\] 
    as follows:
    \begin{itemize}
    \item for each $\rho\in\pos(\delta_1)\setminus\{\pos_{x_1}(\delta_1),\pos_{x_2}(\delta_1)\}$ we let $r'(\rho)=r(\rho)$,
    \item for each $\rho\in\pos(t_{|\rho^{1}_h})$ we let $r'(\rho^{1}_1\rho)=r(\rho^{1}_h\rho)$,
    \item for each $\rho\in\pos(\delta_2[\hat s_1,\ldots, \hat s_{j-1}, x_1,\hat s_{j},\ldots,\hat s_k])\setminus\{\pos_{x_1}(\delta_2[\hat s_1,\ldots, \hat s_{j-1}, x_1,\hat s_{j},\ldots,\hat s_k])\}$ we let\\ $r'(\rho^{2}_1\rho)=r(\rho^{1}_1\rho)$, and
    \item for all $\rho\in\pos(\delta_2[\hat t_1,\ldots, \hat t_{j-1}, x_1,\hat t_{j},\ldots,\hat t_k]\cdot t_{|\rho^{2}_h})$ we let $r'(\rho^{2}_h\rho)=r(\rho^{2}_1\rho)$.
    \end{itemize}
    It remains to argue that $r'$ is successful on $t'$. But this is easy to see: as we only cut out and inserted a part of the tree at positions which carry the same state, all transitions are still applicable. Finally, as $\Psi(t')=\Psi(t)$, we obtain $r'\in\Run_\A(t')$ and, thus, $\xi'\in\L(\A)$. 

    The proof of item (3.) is analogous.
\end{proof}

\section{Non-Global Parikh Tree Automata}

Now we define a non-global variant of Parikh tree automata in which not an extended Parikh image of a whole input tree is computed, but counter vectors that occur in computations \emph{(i)} are added up pathwise and \emph{(ii)} it is tested at each leaf node whether the resulting counter configuration is contained in the semilinear set $C$ of the automaton.

Here we do not consider counter vectors as additional labelings that we guess beforehand, but use them in the transitions as operations which can differ per path, similar as it is done the case of pushdown tree automata. Therefore, a transition that reads a $k$-ary symbol can send $k$ different vectors to the different subtrees. Additionally, we allow a reset operation $\reset$ that sets each counter configuration back to $\bm{0}$. This operation will be needed later to define Parikh tree automata that pass the current counter configuration of each node to exactly one child node instead of copying it to all children. Such a reset operation has also been introduced in the context of tree automata with storage to define a linear model \cite{Her21} and was used for an extension of Parikh string automata (over infinite words) \cite{Gro24}.

\paragraph{\textbf{Non-global Parikh tree automata.}} Let $m\geq 1$. A \emph{(non-global) Parikh tree automaton of dimension $m$ with reset operation} ($m$-PTAR) is a tuple $\A=(Q,\Sigma,q_0,\Delta,C)$ where $Q$ is a finite set of states, $\Sigma$ is a ranked alphabet, $q_0\in Q$ is the initial state, $C\subseteq \Nat^m$ is a semilinear set, and $\Delta$ is a finite set of transitions of the form
\[q\to \sigma(q_1(d_1),\ldots, q_n(d_n))\]
where $n\in\Nat$, $\sigma\in\Sigma^{(n)}$, $q,q_1,\ldots,q_n\in Q$, and $d_1\ldots d_n\in(\Nat^m\cup\{\reset\})$.


The semantics of an $m$-PTAR $\A=(Q,\Sigma,q_0,\Delta,C)$ is defined as follows. We denote by $\textsc{ID}$ the set $Q\times \Nat^m$ of \emph{automaton configurations}, each consisting of a state and a \emph{counter configuration} from $\Nat^m$. For each transition $\tau\in\Delta$ we let $\Rightarrow^\tau$ be the binary relation on the set $T_\Sigma(\textsc{ID})$ such that for each $\zeta_1,\zeta_2\in T_\Sigma(\textsc{ID})$ we have
\[\zeta_1\Rightarrow^\tau\zeta_2\]
if there are $\hat\zeta\in C_\Sigma(\textsc{ID},X_1)$, $\hat\zeta_1,\hat\zeta_2\in T_\Sigma(\textsc{ID})$ such that $\zeta_1=\hat\zeta\cdot\hat\zeta_1$, $\zeta_2=\hat\zeta\cdot\hat\zeta_2$, and either
\begin{itemize}
    \item $\tau=q\to \sigma(q_1(d_1),\ldots, q_n(d_n))$ for some $n\geq 1$, $\hat\zeta_1=(q,w)$, and $\hat\zeta_2=\sigma((q_1,w_1),\ldots,(q_n,w_n))$ where, for each $i\in[n]$, $w_i=w+ d_i$ if $d_i\neq\reset$ and $w_i=\bm{0}$ otherwise, or
    \item $\tau=q\to\alpha$ for some $\alpha\in\Sigma^{(0)}$, $\hat\zeta_1=(q,w)$, $w\in C$, and $\hat\zeta_2=\alpha$.
\end{itemize}
The \emph{computation relation of $\A$} is the binary relation $\Rightarrow_\A=\bigcup_{\tau\in\Delta}\Rightarrow^\tau$. A \emph{computation} is a sequence $t=\zeta_0\Rightarrow^{\tau_1}\zeta_1\ldots\Rightarrow^{\tau_n}\zeta_n$ (sometimes abbreviated as $\zeta_0\Rightarrow^{\tau_1\ldots\tau_n}\zeta_n$) such that $n\in\Nat$, $\zeta_0,\ldots,\zeta_n\in T_\Sigma(\textsc{ID})$, $\tau_1,\ldots,\tau_n\in \Delta$, and $\zeta_{i-1}\Rightarrow^{\tau_i}\zeta_i$ for each $i\in[n]$. We say that the $\emph{length}$ of $t$ is $n$. We call $t$ \emph{successful on $\xi\in T_\Sigma$} if $\zeta_0=(q_0,\bm{0})$ and $\zeta_n=\xi$; the set of all successful computations of $\A$ on $\xi$ is denoted by $\compA{\xi}$. The \emph{language recognized by $\A$} is the set $\L(\A)=\{\xi\in T_\Sigma\mid \compA{\xi}\neq\emptyset\}$.

\paragraph{} Now we consider an example showing the capability of non-global Parikh tree automata to check a semi-linear property for each path in a tree.

\begin{example}\label{L-ab}
    Let $\Sigma=\{a^{(2)},b^{(2)},\#^{(0)}\}$. We consider the tree language $L_{ab}$ containing all trees $\xi$ such that the word of labels of each complete path in $\xi$ is of the form $a^n b^n\#$ for some $n\geq 1$, i.e.,
    \[L_{ab}=\{\xi\in T_{\Sigma}\mid \forall \pi\in\mathrm{paths}(\xi): \xi(\pi)\in\{a^nb^n\#\mid n\geq 1\}\}\,.\]

    This tree language can by recognized by the
    2-PTA $\A=(Q,\Sigma,q_a,\Delta,C)$ where $Q=\{q_a,q_b\}$, $C=\{(i,i)\mid i\geq 1\}$, and $\Delta$ contains the transitions
    \[\tau_{a,1}\colon\ q_a\to a(q_a(1,0),q_a(1,0))\qquad \tau_{a,2}\colon\ q_a\to b(q_b(0,1),q_b(0,1))\]
    and 
    \[\tau_{b,1}\colon\ q_b\to b(q_b(0,1),q_b(0,1))\qquad \tau_{b,2}\colon\ q_b\to \#\ .\]
    The intuition behind this automaton is quite easy: for each $a$ it reads, the first counter component is increased by $1$ and for each $b$ it reads, the second counter component is increased by $1$. Finally, $\#$ can only be computed if the value of the first and the second component is equal. This process becomes clear if we look at a part of some computation for $a(b(\#,\#),b(\#,\#))\in L_{ab}$: let us consider a computation of the form
    \begin{align*}
    (q_a,(0,0))&\Rightarrow^{\tau_{a,1}}a\bigl((q_a,(1,0)),(q_a,(1,0))\bigr)\\&\Rightarrow^{\tau_{a,2}}a\bigl(b((q_b,(1,1)),(q_b,(1,1))),(q_a,(1,0))\bigr)\\
    &\Rightarrow^{\tau_{b,2}}a\bigl(b(\#,(q_b,(1,1))),(q_a,(1,0))\bigr)\Rightarrow^* a(b(\#,\#),b(\#,\#))\,.
    \end{align*}
    Observe that in the third step of the computation, from the automaton configuration $(q_b,(1,1)))$ the application of $\tau_{b,2}$ to compute the leaf $\#$ is allowed only because $(1,1)\in C$.
    \exqed
\end{example}

\begin{remark}\label{remark}
We note that the tree language $L_{ab}$ is a \emph{context-free tree language}: it is a simple observation that the set of path words occurring in its trees is context-free and, thus,   $L_{ab}$ can be recognized by a pushdown tree automaton. However, we can easily extend Example \ref{L-ab} to paths of the form $a^nb^nc^n\#$ by using a third counter -- the resulting tree language would not be context-free anymore.
\end{remark}

\subsection{Restrictions of PTAR}

If $\reset$ does not occur in the transitions of $\A$, we call $\A$ an $m$-PTA. Moreover, we say that $\A$ is \emph{linear} if for each transition $q\to \sigma(q_1(d_1),\ldots, q_n(d_n))$ in $\Delta$ there is at most one $i\in[n]$ with $d_i\in\Nat^m$ and for all $j\neq i$ we have $d_j=\reset$, i.e., at each node the storage is either completely reset or passed to exactly one child.

\paragraph{Spinal computation trees}
Let us define an alternative semantics for linear PTAR, needed later when we prove decidability of their non-emptiness problem.
The idea is to recursively structure the computations of such a linear PTAR \(\A\) as follows: during the computation on an input tree \(\xi\), we always apply the rewrite relation \(\Rightarrow_\A\) to the node \(w \in \pos(\xi)\) that has been passed the storage from its parent node, if there is any such node.  When there is no longer such a node, we apply this process recursively to the remaining nodes labeled by states.

\begin{example}
\label{ex:spinal-comput}
For an example, consider the linear \(1\)-PTAR \(\A = (Q, \Sigma, q, \Delta, C)\), where
\(Q = \{q, p\}\), \(\Sigma = \{\sigma^{(2)}, \alpha^{0}\}\), \(C = \Nat\), and \(\Delta\) contains the transitions
\[q \to \sigma\bigl(q(\reset), q(1)\bigr)\,\text{,} \qquad q \to \sigma\bigl(q(2), p(\reset)\bigr)\,\text{,} \qquad q \to \alpha\,\text{,} \qquad p \to \sigma(q(\reset), q(4)\,\text{,}\]
denoted by \(\tau_1\), \(\tau_2\), \(\tau_3\), and \(\tau_4\), respectively.  Assume the following computation of \(\A\).
\[
\begin{tikzpicture}[inlinesf,sibling distance=2.5em]
\node (t) {\((q, 0)\)};
 \begin{scope}[on background layer]
   \draw[line width=3ex,red!50,line cap=round,line join=round] (t.center) +(-2mm,0) -- (t.center) -- +(2mm,0);
 \end{scope}
\end{tikzpicture}
\Rightarrow^{\tau_1}
\begin{tikzpicture}[inlinesf,sibling distance=2.5em]
  \node (t) {\(\sigma\)}
  child[] { node {\((q,0)\)} }
  child[] { node {\((q,1)\)} };
 \begin{scope}[on background layer]
   \draw[line width=3.75ex,red!50,line cap=round,line join=round] (t.center) to (t-2.center);
 \end{scope}
\end{tikzpicture}
\Rightarrow^{\tau_2}
\begin{tikzpicture}[inlinesf,sibling distance=2.5em]
  \node (t) {\(\sigma\)}
  child[] { node {\((q,0)\)} }
  child[] { node {\(\sigma\)}
    child { node {\((q,3)\)}}
    child { node {\((p,0)\)}}
 };

 \begin{scope}[on background layer]
   \draw[line width=3.75ex,red!50,line cap=round,line join=round] (t.center) to (t-2.center) to (t-2-1.center);
 \end{scope}
\end{tikzpicture}
\Rightarrow^{\tau_3}
\begin{tikzpicture}[inlinesf,sibling distance=2.5em]
  \node (t) {\(\sigma\)}
  child[] { node {\((q,0)\)} }
  child[] { node {\(\sigma\)}
    child { node {\(\alpha\)}}
    child { node {\((p,0)\)}}
 };

 \begin{scope}[on background layer]
   \draw[line width=3.75ex,red!50,line cap=round,line join=round] (t.center) to (t-2.center) to (t-2-1.center);
 \end{scope}
\end{tikzpicture}
\Rightarrow^*
\begin{tikzpicture}[inlinesf,sibling distance=2.5em]
  \node (t) {\(\sigma\)}
  child[] { node[xshift=-2.5ex] {\(\sigma\)}
    child { node {\(\alpha\)}}
    child { node {\(\sigma\)}
      child { node {\(\alpha\)}}
      child { node {\(\alpha\)}}}}
  child[] { node[xshift=2.5ex] {\(\sigma\)}
    child { node {\(\alpha\)}}
    child { node {\(\sigma\)}
      child { node {\(\alpha\)}}
      child { node {\(\alpha\)}}}};

 \begin{scope}[on background layer]
   \draw[line width=3.75ex,red!50,line cap=round,line join=round] (t.center) to (t-2.center) to (t-2-1.center);
   \draw[line width=3.75ex,blue!50,line cap=round,line join=round] (t-1.center) to (t-1-2.center) to (t-1-2-2.center);
   \draw[line width=3.75ex,yellow!50,line cap=round,line join=round] (t-2-2.center) to (t-2-2-2.center);
   \draw[line width=3.75ex,gray!50,line cap=round,line join=round] (t-1-1.center) to (t-1-1.center);
   \draw[line width=3.75ex,green!50,line cap=round,line join=round] (t-1-2-1.center) to (t-1-2-1.center);
   \draw[line width=3.75ex,purple!50,line cap=round,line join=round] (t-2-2-1.center) to (t-2-2-1.center);
 \end{scope}
\end{tikzpicture}
\]

Since in the computation's first step, the storage has been passed to the second child, labeled \((q,1)\), we rewrite this position in the next step, and so on, until the leaf node \(\alpha\) is reached.  The path along which this process takes place is shaded in red.
Afterwards, the process can be applied recursively to the states which did not receive the storage of their parent, resulting in the paths shaded in other colors.

The gist of this section is that the shaded parts of this computation can also be arranged into a tree of subcomputations of the form
\[
\begin{tikzpicture}[inlinesf,sibling distance=4em,
level distance=3em,
  every node/.style={rounded corners=3mm, minimum width=4ex, inner sep=4pt}
  ]
  \node[fill=red!50] {\((q, 0) \Rightarrow^* \sigma((q, 0), \sigma( \alpha, (p, 0)))\)}
  child{ node[xshift=-2cm,fill=blue!50] {\((q, 0) \Rightarrow^* \sigma((q, 0), \sigma( (q, 0), \alpha))\)}
    child { node[xshift=-1cm,fill=gray!50] {\((q, 0) \Rightarrow^* \alpha\)} }
    child { node[xshift=1cm,fill=green!50] {\((q, 0) \Rightarrow^* \alpha\)} }}
  child {node[xshift=2cm,fill=yellow!50] {\((q, 0) \Rightarrow^* \sigma((q, 0), \alpha)\)}
    child { node[fill=purple!50] {\((q, 0) \Rightarrow^* \alpha\)} }};
\end{tikzpicture}
\,\text{,}
\]
called a \emph{spinal computation tree.}  We will prove that if the tree language of a PTAR is not empty, then there is a spinal computation tree of bounded height, leading to a decision procedure.
\exqed
\end{example}

To formally define the notion of spinal computation trees, we have to restrict the derivation relation so that only children that received the storage from their parents can be rewritten.  We do so by constructing a new automaton which only has transitions for such positions.
For this, assume a linear $m$-PTAR \(\A=(Q,\Sigma,q_0,\Delta, C)\).
Let \(Q' = Q \cup \hat Q\), where \(\hat Q = \{\hat q \mid q \in Q\}\).
We construct the linear \(m\)-PTAR \(\A' = (Q', \Sigma, \hat q_0, \Delta', C)\) , where \(\Delta'\) is defined as follows.
\begin{itemize}
\item For every transition of the form \(q \to \alpha\) in \(\Delta\), the set \(\Delta'\) contains the transition \(\hat q \to \alpha\).
\item For every transition \(q \to \sigma(q_1(d_1),\ldots, q_n(d_n))\) in \(\Delta\), where \(d_i = \reset\) for each \(i \in [n]\), the set \(\Delta'\) contains the transition
\[\hat q \to \sigma(q_1(d_1),\ldots, q_n(d_n))\,\text{.}\]
\item Finally, consider a transition \(q \to \sigma(q_1(d_1),\ldots, q_n(d_n))\) in \(\Delta\) such that \(d_i \ne \reset\) for some \(i \in [n]\).  Then the transition
\[\hat q \to \sigma\bigl(q_1(d_1),\ldots, q_{i-1}(d_{i-1}), \hat q_i(d_i),  q_{i+1}(d_{i+1}), \ldots, q_n(d_n)\bigr)\]
is in \(\Delta'\).
\end{itemize}
Note that there are only transitions for states from \(\hat Q\) in \(\Delta'\), the computation cannot continue on states from \(Q\).

Consider a computation
\[\zeta_0 \Rightarrow^{\tau_1} \zeta_1 \Rightarrow^{\tau_2} \cdots \Rightarrow^{\tau_n} \zeta_n\]
of \(\A'\), such that \(n > 0\), \(\zeta_0 = (\hat q, \bm 0)\) for some \(q \in Q\), and \(\zeta_n \in T_\Sigma(Q \times \Nat^m)\).  We call such a computation a \emph{spine computation} of \(\zeta_n\) from \(q\).
In fact, it is easy to see from the definition of \(\A'\) that for every occurrence in \(\zeta_n\) of a tuple \((q, c)\) with \(q \in  Q\) and \(c \in \Nat^m\), it is the case that \(c = \bm 0\).

The set of all spine computations from \(q\) will be denoted by \(S_q\), and given such a spine computation \(s \in S_q\), the generated tree \(\zeta_n\) will be denoted by \(\gentree(s)\).  Moreover, assume that \(\{w_1, \cdots, w_\ell\} \subseteq \pos(\gentree(s))\), for some \(\ell \in \Nat\), is the set of positions in \(\gentree(s)\) with labels from \(Q \times \Nat^m\), and assume that \(w_1\), \ldots, \(w_\ell\) are in left-to-right order, i.e. \(w_1<_{\mathrm{lex}} \cdots <_{\mathrm{lex}} w_\ell\).
Then we will write \(\statepos(s)\) for the sequence \(w_1 \cdots w_\ell\).
Additionally, if for every \(i \in [\ell]\), we have \(\gentree(s)(w_i) = (q_i, \bm 0)\), then we will denote the sequence \(q_1 \cdots q_\ell\) by \(\stateseq(s)\).
For instance, when we write \(s\) for the red-shaded subcomputation from Example~\ref{ex:spinal-comput}, we would have \[\gentree(s) = \sigma((q,0), \sigma(\alpha, (p,0)))\,\text{,}\qquad
\statepos(s) = 1\text{ }\text{~}22\text{,}\qquad\text{and}\qquad
\stateseq(s) = q\,p\text{.}
\]

Now, for every \(q \in Q\), the set of \emph{spinal computation trees} \(D_q\) is defined to be the smallest set such that the following property holds: for every spine computation \(s \in S_q\) with \(\stateseq(s) = q_1\cdots q_\ell\) where \(\ell \in \Nat\), and for every \(d_i \in D_{q_i}\), where \(i \in [\ell]\), the tree \(s(d_1, \ldots, d_\ell)\) is an element of \(D_q\).\footnote{Note that this is the point mentioned in the preliminaries, because of which we must allow trees with labels from an infinite ranked set: the set of labels used for \(D_q\) is the set of spine computations \(\bigcup_{q\in Q}S_q\).}

Given a spinal computation tree \(d = s(d_1, \ldots, d_\ell) \in D_q\) with \(\statepos(s) = w_1\cdots w_\ell\), we finally define the computed tree \(\gentree(d) \in T_\Sigma\) by
\[\gentree(d) = \gentree(s) [\gentree(d_1)]_{w_1} \cdots [\gentree(d_\ell)]_{w_\ell}\,\text{.}\]
This recursive definition is well-behaved because we chose \(D_q\) to be the smallest set of trees fulfilling the given property.

The following lemma relates the rewrite semantics of PTAR to the notion of spinal computation trees.

\begin{lemma}
  \label{lem:semantics}
  Let \(\A=(Q,\Sigma,q_0,\Delta, C)\) be a linear PTAR, let \(q \in Q\), and let \(\xi \in T_\Sigma\).
  Then \((q, \bm 0) \Rightarrow^*_\A \xi\) if and only if there is some \(d \in D_{q}\) with \(\gentree(d) = \xi\).
  In particular, \(\xi \in \L(\A)\) if and only if there is some \(d \in D_{q_0}\) with \(\gentree(d) = \xi\).
\end{lemma}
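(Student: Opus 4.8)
The plan is to prove the main equivalence $(q,\bm 0)\Rightarrow^*_\A\xi$ iff there is some $d\in D_q$ with $\gentree(d)=\xi$ by establishing the two implications separately, each by induction; the ``in particular'' claim then follows immediately by instantiating $q=q_0$ and recalling that $\xi\in\L(\A)$ holds exactly when $(q_0,\bm 0)\Rightarrow^*_\A\xi$, i.e.\ when $\compA{\xi}\neq\emptyset$.

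For the direction from right to left I would argue by structural induction on the spinal computation tree $d=s(d_1,\dots,d_\ell)\in D_q$, where $s\in S_q$ is a spine computation with $\stateseq(s)=q_1\cdots q_\ell$ and $\statepos(s)=w_1\cdots w_\ell$, and $d_i\in D_{q_i}$. First I would observe that every transition in $\Delta'$ arises from a transition of $\Delta$ by ``forgetting the hats'', and that the counter arithmetic of $\A'$ coincides with that of $\A$; hence the spine computation $s$ translates verbatim into a computation $(q,\bm 0)\Rightarrow^*_\A\gentree(s)$ of $\A$ itself, where each state position $w_i$ carries the configuration $(q_i,\bm 0)$. By the induction hypothesis, $(q_i,\bm 0)\Rightarrow^*_\A\gentree(d_i)$ for every $i$. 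Since $\Rightarrow_\A$ is closed under the surrounding context $\hat\zeta$ and the positions $w_1,\dots,w_\ell$ are pairwise independent leaves of $\gentree(s)$, these subcomputations can be spliced in one after another, yielding $(q,\bm 0)\Rightarrow^*_\A\gentree(s)[\gentree(d_1)]_{w_1}\cdots[\gentree(d_\ell)]_{w_\ell}=\gentree(d)=\xi$, as required.

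For the direction from left to right, which I expect to be the crux, I would use induction on $\height(\xi)$. Given a computation $(q,\bm 0)\Rightarrow^*_\A\xi$, I note that every position of $\xi$ is rewritten exactly once, so there is a well-defined transition applied at each position. Starting at the root and, as long as the current node's transition sends a non-$\reset$ increment to some child, descending into that child (unique by linearity), I would trace a path $\varepsilon=\rho_0,\rho_1,\dots,\rho_t$ -- the spine -- which terminates either at a leaf (the base case, where $\ell=0$) or at a node all of whose children are reset. Hatting the $t+1$ transitions applied along this spine gives a valid spine computation $s\in S_q$ with $\gentree(s)$ equal to $\xi$ cut off at its branch positions; the two points to check here are that the accumulated counter at each spine node agrees with the original computation (no $\reset$ occurs along the spine, so the increments simply add up) and that the final leaf constraint ``$\in C$'', if applicable, is inherited from the original computation. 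By linearity every branch position $w_i$ is a reset child and therefore carries counter $\bm 0$; projecting the original computation onto the subtree below $w_i$ yields $(q_i,\bm 0)\Rightarrow^*_\A\xi_{|w_i}$ with $\height(\xi_{|w_i})<\height(\xi)$, so the induction hypothesis supplies $d_i\in D_{q_i}$ with $\gentree(d_i)=\xi_{|w_i}$. Setting $d=s(d_1,\dots,d_\ell)$ then gives the desired element of $D_q$.

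The main obstacle is this extraction step: one must argue that the given, possibly interleaved, computation decomposes into a spine computation from the root together with independent subcomputations on the off-spine subtrees. The enabling facts are that rewrites at independent positions commute and that each off-spine child is reset to $\bm 0$, so that the subcomputation below $w_i$ -- obtained by selecting exactly those rewrite steps acting at positions with prefix $w_i$ -- is a genuine computation from $(q_i,\bm 0)$ in which the same counter values, and hence the same membership tests against $C$, occur whether it is viewed in isolation or inside $\xi$. Linearity is precisely what guarantees that the spine is a single path and that the branches hanging off it are independent, making this projection well defined.
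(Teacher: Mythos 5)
Your proposal is correct and follows essentially the same route as the paper: the ``if'' direction by structural induction on \(d\), composing the spine computation with the recursively obtained subcomputations, and the ``only if'' direction by reordering the given computation so that the non-reset spine comes first, then recursing on the off-spine subtrees, which all start in configurations \((q_i,\bm 0)\). Your explicit induction measure (\(\height(\xi)\)) and the commutation argument for independent rewrite positions just make precise what the paper leaves implicit.
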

The direction ``if'' of the lemma can be shown by recursively ``composing'' the spine computations in \(d\).  For the direction ``only if'', one has to reorder the computation of \(\xi\) such that it begins with the computation steps along the spine where no reset operation is performed.  Then these steps correspond to a spine computation \(s\). As all computations besides the spine start in a configuration \((q, \bm 0)\) for some state \(q \in Q\), this process can be repeated recursively to obtain a spinal computation tree \(d\).

\begin{lemma}
  \label{lem:spinal-bounded}
  For every linear PTAR \(\A\) with state set \(Q\), if \(\L(\A) \ne \emptyset\), then there is some spinal computation tree \(d \in D_{q_0}\) such that \(\height(d) \leq \abs{\Q}\).
\end{lemma}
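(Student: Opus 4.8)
The plan is to start from an arbitrary witness of non-emptiness and shrink it until no state can repeat along a branch; a branch free of repetitions then lists at most $|Q|$ states, bounding the height. Concretely, since $\L(\A)\neq\emptyset$, I would pick any $\xi\in\L(\A)$ and apply Lemma~\ref{lem:semantics} to obtain a spinal computation tree $d\in D_{q_0}$ with $\gentree(d)=\xi$; in particular $D_{q_0}\neq\emptyset$. Among all trees in $D_{q_0}$ I would then fix one, again called $d$, of minimal size (minimal number of positions, viewing $d$ as a tree over the ranked set $\bigcup_{q\in Q}S_q$ of spine computations). Such a minimum exists because sizes range over $\Nat$.

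Next I would observe that every node of $d$ carries a well-defined state: a node label is a spine computation $s$, and by definition $s\in S_p$ begins in the unique configuration $(\hat p,\bm 0)$, so the state $p$ is determined by $s$. By the recursive definition of $D_{q_0}$, following any branch from the root to a leaf yields a sequence of states $p_0=q_0,p_1,\ldots,p_h$ in which each $p_{i+1}$ occurs in $\stateseq$ of the spine computation labeling the $p_i$-node; equivalently, the subtree hanging at the matching position of $\statepos$ lies in $D_{p_{i+1}}$.

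The core claim is that in this minimal $d$ no branch repeats a state. Suppose to the contrary that some branch passes through nodes at depths $a<b$ both carrying state $q$. Every subtree of a spinal computation tree is itself a spinal computation tree indexed by its root state, as is immediate from the smallest-set definition of the sets $D_p$; hence the subtrees $d_a$ and $d_b$ of $d$ rooted at these two nodes both lie in $D_q$, and $d_b$ is a \emph{proper} subtree of $d_a$, so strictly smaller. Replacing $d_a$ by $d_b$ inside $d$ therefore produces a tree $d'$ of strictly smaller size. This replacement preserves membership in $D_{q_0}$: by a straightforward induction on the length of the position at which the substitution is made, swapping one $D_q$-tree for another at a position whose required state is $q$ keeps the whole tree in $D_{q_0}$, since the closure property defining $D_{q_0}$ refers only to the states listed in $\stateseq$ and to the subtrees lying in the corresponding sets $D_{q_i}$. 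Thus $d'\in D_{q_0}$ is strictly smaller than $d$, contradicting minimality.

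Consequently every root-to-leaf branch of $d$ lists pairwise distinct states of $Q$, so it contains at most $|Q|$ nodes and has length at most $|Q|-1$; therefore $\height(d)\le |Q|-1\le|Q|$, as required. The one step I expect to demand real care is the verification that subtree replacement preserves membership in $D_{q_0}$: this is where the ``smallest set closed under $s(d_1,\ldots,d_\ell)$'' characterization must be exploited through an induction on the substitution position. Everything else reduces to the pigeonhole principle together with the strict size decrease that drives the minimality contradiction.
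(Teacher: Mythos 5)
Your proposal is correct and follows essentially the same route as the paper's proof: pigeonhole on the states indexing the spine computations along a branch, followed by replacing the higher subtree in $D_q$ with the lower one to strictly decrease size. The only difference is presentational --- you argue by minimal counterexample where the paper iterates the shrinking step until termination --- and your closing observation that the bound can be sharpened to $\height(d)\le|Q|-1$ is a harmless strengthening.
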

\begin{proof}
  By Lemma~\ref{lem:semantics}, we know that \(\L(\A) \ne \emptyset\) implies the existence of some \(d \in D_{q_0}\). Assume that there is some path \(\rho_1\ldots\rho_n\) in \(d\) such that \(n > \abs{Q}\).  But then there are two distinct indices \(i\) and \(j \in [n]\), say \(i < j\), such that \(d(\rho_i) \in S_q\) and \(d(\rho_j) \in S_q\) for some \(q \in Q\).
Construct
\[d' = d\bigl[ d_{|\rho_j} \bigr]_{\rho_i}\,\text{.}\]
It is easy to see that \(d'\) is also a valid spinal computation tree in \(D_{q_0}\), by inspection of the property used in their definition.  Moreover, \(\size(d') < \size(d)\), so by iterating this construction a finite number of times, we obtain a tree \(d'' \in D_{q_0}\) such that \(\height(d'') \leq \abs{Q}\).
\end{proof}

Now we turn to an example for a tree language that is recognizable by a linear PTAR and still quite powerful: Although each counter configuration is passed to exactly one subtree, this PTAR ensures that an arithmetical constraint holds on each c-path in the trees it accepts.

\begin{example}
    Let $\Sigma=\{a^{(2)},b^{(2)},c^{(1)},d^{(1)},\#^{(0)}\}$. Given a word $w\in\Sigma^*$, we denote by $\mathrm{pref}(w)$ the set of all nonempty prefixes of $w$, i.e., $\mathrm{pref}(w)=\{u\in\Sigma^+\mid u\sqsubseteq w\}$. Now consider the tree language $L_\mathrm{lin}$ consisting of trees $\xi$ of the form \[a(u_1,a(u_2,...a(u_n,b(u_{n+1},...b(u_{2n},\#)))))\] for some $n\geq 1$ and with $u_i\in\{c^md^m\#\mid m\geq 1\}$ for each $i\in[2n]$. Thus, for each $\xi\in L_\mathrm{lin}$ there is exactly one c-path $\pi\in\mathrm{paths}(\xi)$ with $\xi(\pi)=a^nb^n\#$ for some $n\geq 1$ and for each remaining c-path $\pi'\in\mathrm{paths}(\xi)$ we have $\xi(\pi')=wu$ with $w\in\mathrm{pref}(a^nb^n)$ and $u\in\{c^md^m\#\mid m\geq 1\}$.

    The tree language $L_\mathrm{lin}$ can be recognized by the following linear $2$-PTAR: $\A=(Q,\Sigma,q_0,\Delta,C)$ where $Q=\{q_a,q_b,q_c,q_d\}$, $C=\{(i,i)\mid i\geq 1\}$, and $\Delta$ contains the following transitions:
    \begin{itemize}
        \item $q_a\to a(q_c(\reset),q_a(1,0))$, $q_a\to b(q_c(\reset),q_b(0,1))$
        \item $q_b\to b(q_c(\reset),q_b(0,1))$, $q_b\to \#$, and
        \item $q_c\to c(q_c(1,0))$, $q_c\to d(q_d(0,1))$,
        \item $q_d\to d(q_d(0,1))$, $q_d\to \#$
    \end{itemize}
    Thus, in the states $q_a$ and $q_b$, $\A$ counts number of $a$s and $b$s, respectively. When switching into state $q_c$, the counter configuration is reset and from now on it counts the number of $c$s and $d$s.
    \exqed
\end{example}

\section{Expressiveness}

In this section we want to examine how the different formalisms we have introduced relate to each other in terms of their expressiveness.

\subsection{GPTA and PTA}
First, we want to compare the classes of tree languages recognizable by PTA and GPTA. As the different counting mechanisms of the two models already intuitively suggest, the two classes are incomparable. For the formal proof we use tree languages which require counting on paths, or global counting, respectively. We start by showing that the tree language $L_{ab}$ from Example \ref{L-ab} can not be recognized by a global Parikh tree automaton by using the exchange lemma we obtained for GPTA (Lemma \ref{gpta-exchange}).

\begin{lemma}\label{gpta-subset-pta}
The tree language $L_{ab}$ is not GPTA-recognizable.
\end{lemma}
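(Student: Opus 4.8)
The plan is to apply the exchange lemma (Lemma~\ref{gpta-exchange}) to $L_{ab}$ and derive a contradiction by exhibiting a tree in $L_{ab}$ whose rearrangement, guaranteed to lie in $L_{ab}$ by the lemma, actually violates the path constraint $a^nb^n\#$. First I would assume toward a contradiction that $L_{ab}$ is GPTA-recognizable, so that the lemma yields constants $l,p>0$. The key observation is that I get to choose a witness tree $\xi\in L_{ab}$ with at least $l$ pairwise independent subtrees of height at least $p$, and I will engineer it so that the only way the lemma can cut and reinsert a spine $\zeta_2$ (of height strictly less than $p$) is to damage the $a^nb^n$ shape of some complete path.

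The concrete construction I would use is a tree where, sufficiently deep, every complete path spells $a^Nb^N\#$ for some large $N$ chosen relative to $p$ (e.g.\ $N$ much larger than $p$), while the branching near the top provides the required $l$ independent high subtrees. The crucial point is that the spine $\zeta_2$ extracted by the lemma sits within a single subtree and has height $0<\height(\zeta_2)<p$, so its path word is a nonempty factor of length less than $p$ of some $a^Nb^N\#$ path. In items (2.) and (3.) the lemma duplicates this spine ($\zeta_2\cdot\zeta_2$) on one branch while deleting it from another (it is replaced by $u_1$ resp.\ $u_2$). I would track what happens to a complete path running through the duplicated-and-then-continued spine: its $a$-count and $b$-count get altered by the path word of $\zeta_2$ in an unbalanced way relative to the tail $u_i$ that was grafted on, so the resulting path word is no longer of the form $a^mb^m\#$. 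The main work is a careful case analysis of where the factor $\zeta_2$ falls — entirely in the $a$-block, entirely in the $b$-block, or straddling the boundary — showing that in every case some complete path of the rearranged tree has unequal numbers of $a$'s and $b$'s, contradicting membership in $L_{ab}$.

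The hard part will be handling the bookkeeping of the spine and its side-trees $s_1,\dots,s_k,t_1,\dots,t_k$ together with the tails $u_1,u_2$: the exchange lemma's output is syntactically heavy, and I must ensure my witness tree forces $\zeta_2$ to carry a genuinely unbalanced letter-count so that no accidental cancellation restores the $a^nb^n$ pattern. I would choose the witness so that the two distinguished paths (through $\rho^1_1$ and $\rho^2_1$ in the lemma's proof) lie in independent subtrees whose $N$-values I can fix independently, guaranteeing that transplanting a subcomputation from one onto the other cannot simultaneously keep \emph{all} affected complete paths balanced. Once the unbalanced path is exhibited in the tree produced by item (2.) (or symmetrically item (3.)), it contradicts $\xi'\in L_{ab}$, completing the argument that $L_{ab}$ is not GPTA-recognizable.
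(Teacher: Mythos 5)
Your proposal is correct and follows essentially the same route as the paper: assume GPTA-recognizability, apply Lemma~\ref{gpta-exchange} to a crafted witness in $L_{ab}$, and show that deleting/duplicating the spine $\zeta_2$ unbalances some complete path. The paper streamlines the case analysis you anticipate by choosing the witness $\xi_n$ so that all $a$'s lie on a single rightmost path with complete binary $b$-trees $\textsc{Bin}(i)$ of increasing height hanging off it; since $\zeta_2$ must occur at two independent positions, it is forced to consist only of $b$'s, and removing it from one branch immediately yields a path $a^ib^{i-\height(\zeta_2)}\#\notin\{a^mb^m\#\}$, so the boundary-straddling and cancellation cases you worry about never arise.
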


\begin{proof}
Assume that there is some GPTA $\A$ with $\L(\A)=L_{ab}$ and let $p,l\in \Nat$ as in the proof of Lemma \ref{gpta-exchange}. Now consider, for $n\geq 1$, the tree 
\[\xi_n=
    \begin{tikzpicture}[inlinesf,sibling distance=4em] 
      \node {\(a\)}
      child[] { node {\(\textsc{Bin}(1)\)}}
      child[] { node {\(a\)}
      	child { node {\(\textsc{Bin}(2)\)}}
        child { node(n)[yshift=-0.5em, xshift=0.4em]{\rotatebox{-4}{\(\ddots\)}}}};

      \node[right = -0.3em of n, yshift= -1em] {$a$}
      	child[] { node {\(\textsc{Bin}(n)\)}}
      	child[] { node {\(\textsc{Bin}(n)\)}};
    \end{tikzpicture}
 	\] 
where $\textsc{Bin}(n)$ is the complete binary tree over $b$ of height $n$ inductively defined by $\textsc{Bin}(1)=b(\#,\#)$ and $\textsc{Bin}(i)=b(\textsc{Bin}(i-1),\textsc{Bin}(i-1))$ for each $i>1$. Clearly, $\xi_n\in L_{ab}$ for each $n\in\Nat$.

Now choose $n$ big enough such that there are at least $l$ independent subtrees of height at least $p$ in $\xi_n$ and, thus, the requirements of Lemma \ref{gpta-exchange} are fulfilled. However, it is not hard to see that we will not find a context $\zeta_2$ as in Lemma \ref{gpta-exchange} in $\xi_n$ such that item (2.) and (3.) of the lemma are satisfied: as $\zeta_2$ needs to occur in two independent subtrees, it can only consist of $b$s. However, cutting out $b$s in one subtree and inserting them in another one necessarily leads to c-paths that are not of the form $a^k b^k\#$ anymore and, thus, the resulting tree $\xi'$ is not in $L_{ab}$. This is a contradiction.
\end{proof} 

We can use the exchange lemma in a very similar way to show that $L_\mathrm{lin}$, which is recognizable by a linear PTAR, is not GPTA-recognizable either.

\begin{corollary}
  The tree language $L_\mathrm{lin}$ is not GPTA-recognizable.
\end{corollary}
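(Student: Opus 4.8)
The plan is to mimic the proof of Lemma~\ref{gpta-subset-pta} almost verbatim, using the exchange lemma (Lemma~\ref{gpta-exchange}) to derive a contradiction from the assumption that some GPTA $\A$ recognizes $L_\mathrm{lin}$. First I would fix the constants $p,l>0$ provided by Lemma~\ref{gpta-exchange}. Then I would exhibit a family of witness trees in $L_\mathrm{lin}$ with an unbounded number of tall, pairwise independent subtrees. The natural choice is to take $n$ large and let each of the subtrees $u_i$ (the ``hanging'' $c^m d^m\#$ blocks for a sufficiently large common $m$) have height at least $p$; since $L_\mathrm{lin}$ permits arbitrarily many such blocks ($2n$ of them) and arbitrarily large $m$, we can guarantee at least $l$ pairwise independent subtrees of height at least $p$.

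Next I would invoke Lemma~\ref{gpta-exchange} to obtain the decomposition $\xi=\zeta_1[\zeta_2[\ldots]\cdot u_1,\zeta_2[\ldots]\cdot u_2]$, where the repeated context $\zeta_2$ occurs in two independent subtrees and has height strictly between $0$ and $p$. The crucial observation, exactly as in Lemma~\ref{gpta-subset-pta}, is that because $\zeta_2$ must sit inside two \emph{independent} subtrees, and the only way to have two independent subtrees of height $\geq p$ is for them to live within distinct $c^m d^m\#$ blocks (the single $a^n b^n$ spine is not branching in a way that yields many independent tall subtrees), the context $\zeta_2$ can only consist of the monadic symbols $c$ and $d$. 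I would then argue that the reordering guaranteed by item~(2.) or~(3.) of the lemma doubles the $\zeta_2$-portion on one c-path, so that some path picks up an extra copy of a $c$- or $d$-segment while the matching counterpart is removed elsewhere, destroying the balance $c^m d^m$ (or the prefix structure $w\in\mathrm{pref}(a^n b^n)$) required for membership in $L_\mathrm{lin}$. Hence the reordered tree is not in $L_\mathrm{lin}$, contradicting the lemma.

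The main obstacle I anticipate is being careful about \emph{where} the independent tall subtrees can occur. Unlike $L_{ab}$, which is a ``bushy'' full binary tree of $b$'s giving many independent tall subtrees for free, the trees of $L_\mathrm{lin}$ have a thin $a/b$ spine with monadic $c^m d^m\#$ blocks dangling off it. So I must verify that the $c d$-blocks genuinely supply enough pairwise independent subtrees of height $\geq p$: distinct blocks $u_i$ and $u_{i'}$ hang off different spine nodes and are therefore independent, and each is a monadic chain of length $2m+1\geq p$ once $m\geq p$. With $2n\geq l$ and $m\geq p$ this is satisfied. The second delicate point is confirming that $\zeta_2$ cannot straddle the $a/b$ spine: since $\zeta_2$ appears in two independent subtrees and those subtrees must be the tall monadic $cd$-blocks, the spine context $\zeta_2$ lies entirely within such blocks and is built only from $c$'s and $d$'s. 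Once these localization facts are nailed down, the counting contradiction is identical in spirit to Lemma~\ref{gpta-subset-pta}, so I would keep the argument brief and simply point to that proof for the repeated reasoning.
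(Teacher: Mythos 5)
Your proposal is correct and is exactly the argument the paper intends: the corollary is stated without proof precisely because it follows from Lemma~\ref{gpta-exchange} "in a very similar way" to Lemma~\ref{gpta-subset-pta}, namely by choosing witnesses with at least $l$ independent $c^md^m\#$ blocks of height at least $p$, localizing $\zeta_2$ to the monadic $c/d$ part via independence, and observing that the rearrangement in item~(2.) or~(3.) destroys the $c^ad^a\#$ form of some block. The only point worth spelling out is the case where $\zeta_2$ spells $c^jd^j$ straddling the $c/d$ boundary, where the block losing the segment stays balanced; there the contradiction comes from the doubled block, whose path then contains the factor $dc$ and so is not of the form $c^ad^a\#$ either.
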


For the other direction, we consider a tree language where the number of symbol occurrences on two different paths are compared. This global counting behavior cannot be implemented by non-global PTA.

\begin{example}
    Let $\Sigma=\{\sigma^{(2)},\gamma^{(1)},\#^{(0)}\}$ and consider the tree language
    \[L_{\gamma\gamma}=\{\sigma(\gamma^n\#,\gamma^n\#)\mid n\in\Nat\}\]
    which can be recognized by the following 2-GPTA $\A$: We let $\A=(Q,\Sigma, D,q_0,\Delta,C)$ where $Q=\{q_0,q_1,q_2\}$, $D=\{(0,0),(1,0),(0,1)\}$, $C=\{(i,i)\mid i\in\Nat\}$, and $\Delta$ consists of the transitions
    \begin{itemize}
        \item $q_0\to\langle\sigma,(0,0)\rangle(q_1,q_2)$,
        \item $q_1\to\langle\gamma,(1,0)\rangle(q_1)$ and $q_2\to\langle\gamma,(0,1)\rangle(q_2)$, as well as
        \item $q_1\to\langle\#,(0,0)\rangle$ and $q_2\to\langle\#,(0,0)\rangle$.\exqed
    \end{itemize}
\end{example}

\begin{lemma}\label{pta-subset-gpta}
    The language $L_{\gamma\gamma}$ is not PTA-recognizable.
\end{lemma}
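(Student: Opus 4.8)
The plan is to derive a contradiction from the independence of the two subtree computations, which is the defining feature of the non-global model. Suppose toward a contradiction that $\A=(Q,\Sigma,q_0,\Delta,C)$ is a PTA with $\L(\A)=L_{\gamma\gamma}$. The crucial observation I would isolate first is this: once the root transition $q_0\to\sigma(q_L(d_L),q_R(d_R))$ has been applied to the initial configuration $(q_0,\bm 0)$, producing $\sigma((q_L,d_L),(q_R,d_R))$ (note $d_L,d_R\in\Nat^m$, since a PTA has no $\reset$), the remaining rewriting splits into two entirely independent computations. The counter accumulated along the left path never influences the right subtree and vice versa, and the test ``$w\in C$'' carried out at each leaf concerns only the counter collected along that leaf's own path. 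Hence a successful computation on a tree $\sigma(\xi_1,\xi_2)$ decomposes into the root step followed by computations $(q_L,d_L)\Rightarrow^*_\A\xi_1$ and $(q_R,d_R)\Rightarrow^*_\A\xi_2$; conversely, any two such subcomputations that share a common root transition can be recombined into a successful computation on $\sigma(\xi_1,\xi_2)$.

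With this interchange principle in hand, I would fix for each $n\in\Nat$ a successful computation of $\A$ on $\sigma(\gamma^n\#,\gamma^n\#)\in L_{\gamma\gamma}$ and record the root transition $\tau_n\in\Delta$ used in its first step (since the root symbol $\sigma$ has rank $2$, this step must read $\sigma$). As $\Delta$ is finite, the pigeonhole principle yields two indices $n_1\neq n_2$ with $\tau_{n_1}=\tau_{n_2}=\tau$, where $\tau=q_0\to\sigma(q_L(d_L),q_R(d_R))$.

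I would then extract, from the computation on $\sigma(\gamma^{n_1}\#,\gamma^{n_1}\#)$, the left subcomputation $(q_L,d_L)\Rightarrow^*_\A\gamma^{n_1}\#$, and from the computation on $\sigma(\gamma^{n_2}\#,\gamma^{n_2}\#)$ the right subcomputation $(q_R,d_R)\Rightarrow^*_\A\gamma^{n_2}\#$. Recombining these with the shared root step $\tau$ produces a successful computation $(q_0,\bm 0)\Rightarrow^\tau\sigma((q_L,d_L),(q_R,d_R))\Rightarrow^*_\A\sigma(\gamma^{n_1}\#,\gamma^{n_2}\#)$. Therefore $\sigma(\gamma^{n_1}\#,\gamma^{n_2}\#)\in\L(\A)=L_{\gamma\gamma}$, which is impossible because $n_1\neq n_2$. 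This contradiction completes the argument.

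The step I expect to require the most care, although it is conceptually routine, is making the ``extract and recombine'' operation precise. Because $\Rightarrow_\A$ rewrites one state-labelled leaf at a time and may interleave steps between the two subtrees, I would need to argue that the subsequence of steps acting within a fixed subtree really forms a valid standalone computation from the configuration handed down by the root, and that gluing two such independent sequences back under $\tau$ yields a legal computation. This is exactly the place where the path-local, non-global semantics is used, and it is precisely the property that separates PTA from GPTA.
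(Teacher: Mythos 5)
Your proposal is correct and follows essentially the same route as the paper: both arguments rest on the finiteness of the choices available at the root step (you pigeonhole on root transitions, the paper on the finitely many configurations reachable in one step from $(q_0,\bm 0)$, which amounts to the same thing here), and both then exploit the independence of the two subtree computations to recombine a left derivation of $\gamma^{n_1}\#$ with a right derivation of $\gamma^{n_2}\#$ for $n_1\neq n_2$, yielding $\sigma(\gamma^{n_1}\#,\gamma^{n_2}\#)\notin L_{\gamma\gamma}$. The ``extract and recombine'' step you flag as needing care is likewise taken for granted in the paper's proof, so your level of rigor matches theirs.
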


\begin{proof}
Assume towards a contradiction there is some $m\in\Nat$ and an $m$-PTA $\A$ with $L(\A)=L_{\gamma\gamma}$. As all trees in $L_{\gamma\gamma}$ are of the shape $\sigma(\gamma^n\#,\gamma^n\#)$ for some $n\in\Nat$, each computation of $\A$ on some $\xi=\sigma(\xi_1,\xi_1)\in L_{\gamma\gamma}$ has to be of the form
\[(q_0,\bm{0})\Rightarrow\sigma((q_1,s_1),(q_2,s_2))\Rightarrow_\A^*\xi\]
for some $(q_1,s_1),(q_2,s_2)\in\id$. 
As $\Delta$ is finite, there are only finitely many configurations $(q,s)\in\id$ reachable from $(q_0,\bm{0})$ in one step by reading a $\sigma$ and occurring in a successful computation; we denote the set of all those configurations by $\id_1$. However, as $L_{\gamma\gamma}$ is infinite, there has to be a $(q,s)\in\id_1$ with $(q,s)\Rightarrow_\A^*\gamma^{n_1}\#$, $(q,s)\Rightarrow_\A^*\gamma^{n_2}\#$, and $n_1\neq n_2$.
Suppose without loss of generality that \((q,s)\) is reached in the left subtree of \(\sigma\), i.e.\ \((q_0,\bm{0}) \Rightarrow_\A \sigma((q,s), (q',s'))\) for some \((q', s') \in \id\).

By the assumption on $(q,s)$, there exists a computation
\[(q_0,\bm{0})\Rightarrow\sigma((q,s),(q',s'))\Rightarrow_\A^*\sigma(\xi_1,\xi_1)\] and we can assume that $\xi_1=\gamma^{n_1}\#$. But by the above also 
\[(q_0,\bm{0})\Rightarrow\sigma((q,s),(q',s'))\Rightarrow_\A^*\sigma(\gamma^{n_2}\#,\xi_1)\]
and $\sigma(\gamma^{n_2}\#,\xi_1)\notin L_{\gamma\gamma}$, which is a contradiction.
\end{proof}

From Lemma \ref{gpta-subset-pta} and \ref{pta-subset-gpta} it immediately follows that the tree languages recognizable by GPTA and PTA are incomparable.

\begin{theorem}\label{thm:uncomp}
    The classes of tree languages recognizable by GPTA and PTA are incomparable.
\end{theorem}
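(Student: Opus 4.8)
The plan is to unpack the definition of incomparability of two language classes: to establish that the GPTA-recognizable and PTA-recognizable tree languages are incomparable, I must exhibit one tree language lying in the first class but not the second, and a second tree language lying in the second class but not the first. Since all the substantive work has already been carried out in the two preceding lemmas, the proof amounts to assembling these witnesses and recording both non-containments explicitly.

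For the direction showing that PTA-recognizability is not subsumed by GPTA-recognizability, I would take $L_{ab}$ as the witness. This language is PTA-recognizable, as Example~\ref{L-ab} exhibits an explicit $2$-PTA recognizing it. On the other hand, Lemma~\ref{gpta-subset-pta} establishes that $L_{ab}$ is \emph{not} GPTA-recognizable. Hence the class of PTA-recognizable tree languages is not contained in the class of GPTA-recognizable ones.

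For the converse direction, I would take $L_{\gamma\gamma}$ as the witness. This language is GPTA-recognizable, as witnessed by the explicit $2$-GPTA constructed in the example immediately preceding Lemma~\ref{pta-subset-gpta}. By Lemma~\ref{pta-subset-gpta}, however, $L_{\gamma\gamma}$ is \emph{not} PTA-recognizable. Hence the class of GPTA-recognizable tree languages is not contained in the class of PTA-recognizable ones.

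Combining both directions, neither class is contained in the other, which is exactly the assertion of incomparability. I do not anticipate any genuine obstacle at this stage: the nontrivial arguments — the application of the exchange lemma (Lemma~\ref{gpta-exchange}) inside Lemma~\ref{gpta-subset-pta}, and the path-counting argument of Lemma~\ref{pta-subset-gpta} — are already complete. The only care required is to be explicit that each witness language belongs to one class (via its construction) while the corresponding lemma rules out membership in the other, so that both non-containments are unambiguously in place before concluding.
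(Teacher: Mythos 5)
Your proposal matches the paper's argument exactly: the paper derives the theorem immediately from Lemma~\ref{gpta-subset-pta} and Lemma~\ref{pta-subset-gpta}, using $L_{ab}$ (PTA-recognizable by Example~\ref{L-ab}) and $L_{\gamma\gamma}$ (GPTA-recognizable by the preceding example) as the two witnesses. The proof is correct and complete.
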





    


    

\subsection{PTA, PTAR, and Linear PTAR}

In contrast to the string case, in the tree case a reset cannot be simulated simply by guessing the last reset position: because of branching, a counter configuration could be processed further in one subtree, while a reset takes place in the second subtree. This observation is illustrated by the following example.

\begin{example}
We consider the ranked alphabet $\Sigma=\{\sigma^{(2)},a^{(1)},b^{(1)},\alpha^{(0)}\}$ as well as the $2$-PTAR $\A=(\{q_0,q_1\},\Sigma,q_0,\Delta,C)$ where $C=\{(i,i)\mid i\in\Nat\}$ and $\Delta$ contains the following transitions:
\begin{itemize}
    \item $q_j\to a(q_j(1,0))$ and $q_j\to b(q_j(0,1))$ for $j\in\{0,1\}$,
    \item $q_0\to\sigma(q_1(0,0),q_0(\reset))$, $q_0\to\sigma(q_1(0,0),q_1(0,0))$, and
    \item $q_1\to\alpha$
\end{itemize}
It is easy to observe that for each tree $\xi\in\L(\A)$ it holds that if the context $\sigma(x_1,w_1(\sigma(w_2\alpha,x_2)))$ occurs in $\xi$ for some $w_1,w_2\in\{a,b\}^*$, then the number of $a$s in $w_1w_2$ equals the number of $b$s in $w_1w_2$. Moreover, each $\sigma$ only occurs on the rightmost c-path in $\xi$. Thus, each $\xi\in\L(\A)$ is of the form
    \begin{figure}[H]
        \centering
        \includegraphics[width=19em]{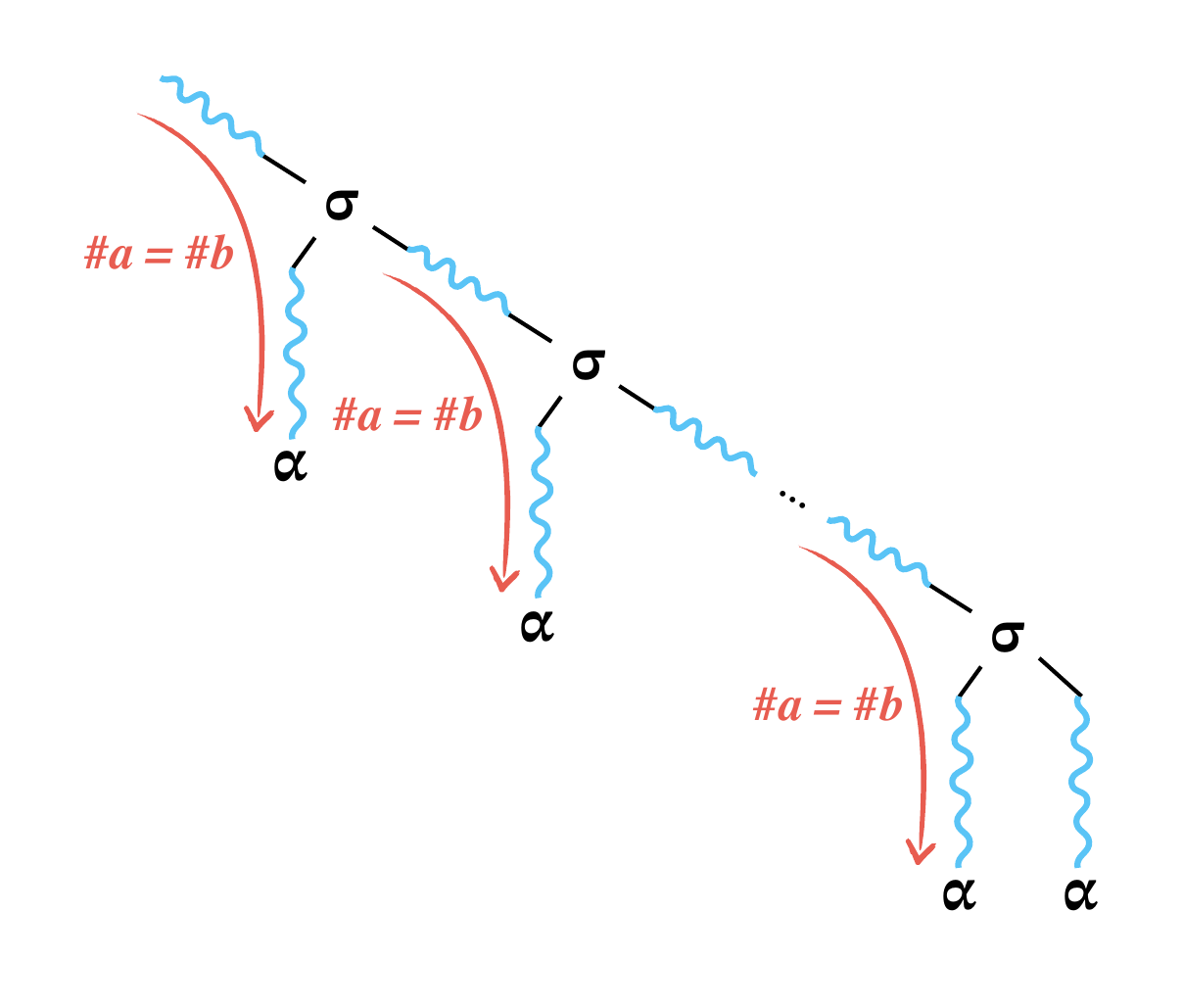}
        \label{fig:enter-label}
    \end{figure}
where the red arrows indicate the paths in $\xi$ on which the number constraint on $a$s and $b$s is tested, respectively. As there might be arbitrary many such tests that are not calculated in completely independent subtrees, it is crucial to reset the counter configuration in between.
\exqed
\end{example}

Therefore, we strongly expect PTAR to be more expressive than PTA. However, our proof methods for PTA were not sufficient to formally prove this statement.

\begin{conjecture}\label{conjec}
    PTA are strictly less expressive than PTAR.
\end{conjecture}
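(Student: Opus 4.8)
The plan is to note first that one inclusion is free: every $m$-PTA is literally an $m$-PTAR in whose transition set $\Delta$ the symbol $\reset$ never occurs, so the same automaton read as a PTAR recognizes the same language. Hence it suffices to exhibit one PTAR-recognizable tree language that no PTA can recognize, and the natural witness is the language $L_0$ recognized by the $2$-PTAR in the example immediately preceding this conjecture: PTAR-recognizability holds by construction, and $L_0$ is designed precisely so that the reset is essential.

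Next I would isolate the combinatorial heart of $L_0$. Restricting to its ``caterpillar'' trees $v_0\,\sigma(w^{(1)}\alpha,\, v_1\,\sigma(w^{(2)}\alpha,\,\ldots))$, the membership condition becomes a family of \emph{local} constraints, one per left branch: the $i$-th leaf requires $\#_a(v_{i-1}w^{(i)})=\#_b(v_{i-1}w^{(i)})$, where $v_{i-1}$ is the spine segment read since the last $\sigma$ and $w^{(i)}$ is the left branch. The key point is that each $v_{i-1}$ may carry an arbitrary $a/b$-imbalance which is cancelled only within its own pair $v_{i-1}w^{(i)}$; consequently the cumulative imbalance along the spine is unbounded and decoupled from the individual constraints. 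A PTAR handles this by resetting at each $\sigma$, whereas a PTA cannot: its counters lie in $\Nat^m$ and are only ever incremented by vectors in $\Nat^m$ (no $\reset$, no decrement), so the configuration $(q^{(i)},s^{(i)})\in Q\times\Nat^m$ entering the $i$-th left branch necessarily accumulates the weight of all earlier spine segments.

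The intended lower-bound technique is a pumping/exchange argument tailored to PTA on caterpillar trees, in the spirit of Lemma~\ref{gpta-exchange} and of the configuration-pigeonhole used in Lemma~\ref{pta-subset-gpta}. Assuming an $m$-PTA $\A=(Q,\Sigma,q_0,\Delta,C)$ with $\L(\A)=L_0$, I would track the sequence $(q^{(i)},s^{(i)})$ of configurations at the $\sigma$-nodes along a long spine. Using monotonicity of the $s^{(i)}$, the pigeonhole principle on the finite set $Q$, and the periodic structure of the semilinear set $C$ (as a finite union of linear sets, in the sense of the preliminaries), one locates two spine positions $i<j$ whose states coincide and whose counter residues agree modulo the periods of $C$. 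One then splices (or iterates) the spine block between them, choosing the segments so that the block's $a/b$-increment lies in the period lattice of $C$; this should preserve $C$-membership at every leaf below the splice while altering the balance of exactly one pair $v\,w$, producing a tree that $\A$ still accepts but that violates a local constraint and hence lies outside $L_0$ -- the desired contradiction.

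The main obstacle -- and exactly the gap flagged in the text -- is making this simultaneous control rigorous. Unlike GPTA, a PTA tests the constraint $C$ at \emph{every} leaf, so any surgery on the spine changes the accumulated counter at \emph{all} leaves below the splice point at once; the proof of Lemma~\ref{gpta-exchange} instead exploited invariance of the \emph{single} global Parikh image under reordering, an invariant that is simply unavailable here. A correct argument must therefore pump in a way that keeps all downstream leaf checks satisfied (forcing the pumped increment into the period lattice of $C$ in every component simultaneously) while still breaking one constraint and preserving the caterpillar shape. Establishing such a controlled pumping lemma for non-global PTA is the principal missing ingredient, and it is precisely the reason the statement remains a conjecture rather than a theorem.
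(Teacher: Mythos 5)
The statement you are asked to prove is explicitly left \emph{open} in the paper: it appears as a conjecture, and the authors state in the surrounding text and in the conclusion that their proof methods for PTA were not sufficient to establish it. So there is no paper proof to compare against, and your proposal -- as you yourself concede in its final paragraph -- is not a proof either, but a sketch of an attack whose ``principal missing ingredient'' you correctly identify. Your preliminary observations are sound and match the paper's informal discussion: the inclusion of PTA-recognizable languages in PTAR-recognizable ones is immediate since a PTA is a PTAR without $\reset$, the example preceding the conjecture is the natural candidate witness, and the obstruction to an exchange/pumping argument is exactly the one the authors name (the acceptance test is performed at every leaf, so there is no single global invariant like $\Psi(t')=\Psi(t)$ to preserve under reordering).

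Beyond the gap you flag, two further concrete problems would have to be resolved before your sketch could become a proof. First, your pigeonhole step presumes that membership in $C$ is determined by a ``counter residue modulo the periods of $C$''; but a semilinear set is a finite union of linear sets $\{d_0+\sum_i m_i d_i\}$, and such sets are in general not unions of residue classes of finite index (consider $C=\{(i,i)\mid i\in\Nat\}$, where membership depends on the difference of the coordinates, which takes infinitely many values along a monotone spine). So ``two spine positions whose residues agree'' is not a finitely-colored pigeonhole as stated. Second, the surgery is self-undermining: if the pumped spine block's increment lies in the common period lattice of the relevant linear components -- which is what you require in order to keep all downstream leaf checks satisfied -- then that same invariance also preserves the check at the one leaf whose constraint you intend to violate, since that leaf's accumulated counter is shifted by the very same lattice element. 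A correct argument needs a mechanism that affects one leaf's counter differently from the others' (e.g.\ relocating a left branch rather than translating the whole suffix), and constructing such a mechanism compatibly with the pathwise, per-leaf semantics is precisely what neither you nor the authors currently have. The conjecture therefore remains open, and your text should be read as a problem analysis rather than a proof.
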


Finally, we observe that also the property of a PTAR to be linear restricts its expressive power.

\begin{lemma}
    Linear PTAR are strictly less expressive than PTAR.
\end{lemma}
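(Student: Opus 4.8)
The containment of linear PTAR languages in PTAR languages is immediate, since linearity is merely a syntactic restriction on the transitions: every linear PTAR is in particular a PTAR, so the class of linear-PTAR-recognizable languages is contained in the class of PTAR-recognizable ones. The whole work therefore lies in exhibiting a separating language, and the plan is to use one with a single branching node at which an arithmetic constraint on \emph{both} emerging subtrees must refer to a quantity accumulated \emph{before} the branch. Concretely, over $\Sigma=\{\sigma^{(2)},a^{(1)},b^{(1)},\#^{(0)}\}$ I would take
\[
L_{\mathrm{share}}=\{a^k\,\sigma(b^k\#,b^k\#)\mid k\geq 1\}\,,
\]
using the paper's convention of writing monadic trees as words. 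This language is PTA-recognizable: a $2$-PTA counts the $a$s in the first component along the monadic prefix, copies the resulting configuration $(k,0)$ to \emph{both} children of $\sigma$, then counts $b$s in the second component and checks membership in $C=\{(i,i)\mid i\geq 1\}$ at each $\#$. Since a PTA is just a PTAR in which $\reset$ does not occur, $L_{\mathrm{share}}$ is in particular PTAR-recognizable.

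To prove that $L_{\mathrm{share}}$ is not linear-PTAR-recognizable, I would argue by contradiction: assume a linear $m$-PTAR $\A=(Q,\Sigma,q_0,\Delta,C)$ with $\L(\A)=L_{\mathrm{share}}$, and for each $k\geq 1$ fix a successful computation on $t_k=a^k\sigma(b^k\#,b^k\#)$. Focus on the unique position labelled $\sigma$: the computation reaches it in some configuration $(p,c)\in\id$ and applies some transition $\tau\in\Delta$. By linearity, $\tau$ resets at least one of the two children of $\sigma$, and the subtree hanging below a reset child is then processed entirely from a fresh configuration $(q,\bm 0)$, independently of the monadic $a$-prefix. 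Since $Q$ and $\Delta$ are finite, the pigeonhole principle yields two distinct lengths $k_1\neq k_2$ whose chosen computations agree on the state $p$ at $\sigma$ and on the transition $\tau$ applied there, and hence in particular on which child is reset and on the state $q$ sent to it.

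The key step is then a cut-and-paste. Because $\Rightarrow_\A$ acts locally on subtrees and the computation below a reset child starts from $(q,\bm 0)$ regardless of its surroundings, I can splice the computation of the reset $b^{k_2}\#$-subtree taken from $t_{k_2}$ into the computation on $t_{k_1}$, in place of the corresponding reset subtree. This yields a successful computation of $\A$ on the tree $a^{k_1}\sigma(b^{k_1}\#,b^{k_2}\#)$: the prefix and the non-reset side are copied verbatim from the $t_{k_1}$-computation, while the reset side is copied verbatim from the $t_{k_2}$-computation. But this tree has a complete path $a^{k_1}b^{k_2}\#$ with $k_1\neq k_2$, so it does not belong to $L_{\mathrm{share}}$, a contradiction. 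I expect the main obstacle to be the bookkeeping of this splicing: one must check that every spliced transition remains applicable (it does, since the insertion point carries exactly the configuration $(q,\bm 0)$ in both source computations) and that the leaf constraints of $C$ are still met on both sides (they are, since each side reproduces a side of a genuinely accepting computation). The cases where $\tau$ resets the left child or resets both children are handled symmetrically. Alternatively, the same separation could be phrased through the spinal-computation-tree semantics of Lemma~\ref{lem:semantics}, observing that the counter value tested at any leaf depends only on the single reset-free spine segment ending at that leaf, so no linear PTAR can make both sub-branches of $\sigma$ depend on the shared $a$-count.
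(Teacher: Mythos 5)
Your proposal is correct and follows essentially the same route as the paper: the paper separates via $L_3=\{\gamma^n(\sigma(\gamma^n\#,\gamma^n\#))\mid n\in\Nat\}$, which is structurally identical to your $L_{\mathrm{share}}$, shows it is $2$-PTA-recognizable by the same two-component counting scheme, and refutes linear recognizability by exactly your observation that a linear transition at $\sigma$ must reset at least one child, followed by a pigeonhole-and-swap argument on the finitely many configurations $(p,\bm{0})$ entering the reset subtree. The only difference is cosmetic (one unary symbol versus two), so no further comparison is needed.
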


\begin{proof}
Let $\Sigma=\{\sigma^{(2)},\gamma^{(1)},\#^{(0)}\}$ and consider the tree language $L_3=\{\gamma^n(\sigma(\gamma^n\#,\gamma^n\#))\mid n\in\Nat\}$. This tree language can be recognized by a $2$-PTA $\A=(\{q_0,q_1\},\Sigma,q_0,\Delta,C)$ where $C=\{(i,i)\mid i\in\Nat\}$ and $\Delta$ consists of the transitions $q_0\to\gamma(q_0(1,0))$, $q_0\to\sigma(q_1(0,0),q_1(0,0))$, $q_1\to\gamma(q_1(0,1))$, and $q_1\to\#$.

On the other hand, it is not hard to see that $L_3$ cannot be recognized by any linear PTAR $\A'$: by definition, each transition recognizing $\sigma$ in $\A'$ has to be of the form $q\to\sigma(q_1(d),q_2(\reset))$ or $q\to\sigma(q_1(\reset),q_2(d))$. Then the argumentation is very similar to the proof of Lemma \ref{pta-subset-gpta}: we will find a state $p$ such that (1) $(p,\bm{0})$ occurs in a successful computation of $\A'$ and (2) there are $n_1,n_2\in\Nat$ with $n_1\neq n_2$, $(p,\bm{0})\Rightarrow_{\A'}^*\gamma^{n_1}\#$, and $(p,\bm{0})\Rightarrow_{\A'}^*\gamma^{n_2}\#$.  Thus, $\A'$ cannot recognize $L_3$.
\end{proof}





\section{Decidability}

Now we investigate the question of decidability for two basic problems of PTAR -- the non-emptiness problem and the membership problem.\footnote{Note that the universality problem is undecidable already for Parikh string automata \cite[Prop. 7]{KlaRue03}.} The former is undecidable in general: as soon as we consider PTA of at least dimension 3, we can simulate calculations of two-counter machines \cite{Min67,Joh00} in a similar way as it was done in \cite[Lemma 3.4]{LinMitSce92} for \emph{and-branching two-counter machines without zero-test} (ACM).

A \emph{two-counter machine (2CM)} is a tuple $M=(Q,q_0,Q_f,T)$ where $Q$ is a finite set of states, $q_0\in Q$ is the initial state, $Q_f\subseteq Q$ is a set of final states and $T$ is a finite set of transitions of the following two forms:
\[(q,f,q') \tag{instruction}\]
\[(q,p,q') \tag{zero-test}\]
where $q,q'\in Q$, $p\in\{0(1),0(2)\}$, and $f\in\{\inc(1),\inc(2),\dec(1),\dec(2)\}$.

Define for convenience of notation $\neg 1=2$ and $\neg 2=1$. For each $\tau\in T$ we let $\Rightarrow^\tau$ be the binary relation on $Q\times\Nat\times\Nat$ such that for each $(q,k_1,k_2),(q',k_1',k_2')\in (Q\times\Nat\times\Nat)$ we have $(q,k_1,k_2)\Rightarrow^\tau(q',k_1',k_2')$ if either
\begin{itemize}
    \item $\tau=(q,\inc(i),q')$ for some $i\in\{1,2\}$, $k_i'=k_i+1$, and $k_{\neg i}'=k_{\neg i}$, or
    \item $\tau=(q,\dec(i),q')$ for some $i\in\{1,2\}$, $k_i>0$, $k_i'=k_i-1$, and $k_{\neg i}'=k_{\neg i}$, or
    \item $\tau=(q,0(i),q')$, $k_i=0$, $k_1'=k_1$, and $k_2=k_2'$.
\end{itemize}
We let $\Rightarrow_M=\bigcup_{\tau\in T}\Rightarrow^\tau$. We say that $M$ \emph{accepts} if $(q_0,0,0)\Rightarrow_M^*(q_f,0,0)$ for some $q_f\in Q_f$. By the classical result that, given a 2CM $M$ and $k_1,k_2\in\Nat$, it is undecidable whether $(q_0,k_1,k_2)\Rightarrow_M^*(q_f,0,0)$ \cite{Min67}, it is straightforward to obtain undecidability of acceptance of 2CM.

\begin{lemma}[{\cite{Min67}}]\label{lem:2CM}
    Let $M$ be a 2CM. It is undecidable whether $(q_0,0,0)\Rightarrow_M^*(q_f,0,0)$ for some $q_f\in Q_f$.
\end{lemma}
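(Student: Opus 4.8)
The plan is to reduce the classical problem cited just above the lemma --- namely, deciding for a given 2CM $M = (Q, q_0, Q_f, T)$ and given initial counter values $k_1, k_2 \in \Nat$ whether $(q_0, k_1, k_2) \Rightarrow_M^* (q_f, 0, 0)$ for some $q_f \in Q_f$ --- to the acceptance problem of the lemma, where the computation is required to start from the all-zero configuration $(q_0, 0, 0)$. Since the former is undecidable by \cite{Min67} and the reduction I describe is clearly computable, undecidability of the latter follows.

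Given an instance $(M, k_1, k_2)$, I construct a 2CM $M'$ that first deterministically counts the two counters up to $(k_1, k_2)$ and then hands control to $M$. Set $N = k_1 + k_2$. If $N = 0$, take $M' = M$ and there is nothing to do. Otherwise introduce $N$ fresh states $p_0, \ldots, p_{N-1}$ (disjoint from $Q$), put $p_N := q_0$, let the initial state of $M'$ be $p_0$, and keep $Q_f$ as the final states. The transition set $T'$ is $T$ together with the initialization chain $(p_{i-1}, \inc(1), p_i)$ for $i \in [k_1]$ and $(p_{k_1 + j - 1}, \inc(2), p_{k_1+j})$ for $j \in [k_2]$. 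Note that a 2CM has no silent transition, so the handoff to $q_0$ is realized simply by letting the last increment of the chain target $p_N = q_0$.

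It then remains to verify the correspondence $(p_0, 0, 0) \Rightarrow_{M'}^* (q_f, 0, 0) \iff (q_0, k_1, k_2) \Rightarrow_M^* (q_f, 0, 0)$ for every $q_f \in Q_f$. For the forward direction, observe that each prelude state $p_i$ with $i < N$ has exactly one outgoing transition in $T'$, and that no transition of $T'$ enters a prelude state other than along the chain (the $p_i$ are fresh and do not occur in $T$). Hence any computation of $M'$ from $(p_0, 0, 0)$ is forced to execute the entire chain first, arriving deterministically at $(q_0, k_1, k_2)$, after which only transitions of $T$ are available and $M'$ behaves exactly as $M$. The backward direction is immediate by prefixing any $M$-computation from $(q_0, k_1, k_2)$ with the chain.

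The argument involves no deep obstacle; the only point requiring care is precisely this forcedness of the initialization phase --- that it cannot be re-entered, truncated, or interleaved with the simulation of $M$ --- which is guaranteed by choosing the prelude states fresh and deterministic, together with handling the edge case $N = 0$ separately. Everything else is a routine bookkeeping argument on the shape of $\Rightarrow_{M'}$.
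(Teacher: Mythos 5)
Your reduction is exactly the one the paper alludes to in the sentence preceding the lemma (``it is straightforward to obtain undecidability of acceptance of 2CM'' from the undecidability of reachability from $(q_0,k_1,k_2)$), and it is correct: the fresh, deterministic increment chain forces initialization to $(k_1,k_2)$ before control passes to $q_0$, and the freshness of the prelude states guarantees the phase cannot be re-entered. No issues.
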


\begin{theorem}\label{thm:undecidable}
    For each $m\geq 3$ and $m$-PTA $\A$ it is undecidable whether $\L(\A)\neq \emptyset$.
\end{theorem}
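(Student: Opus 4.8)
The plan is to reduce the acceptance problem for 2CM, which is undecidable by Lemma~\ref{lem:2CM}, to the non-emptiness problem for $3$-PTA. Given a 2CM $M=(Q,q_0,Q_f,T)$, I would construct a $3$-PTA $\A$ over a suitable ranked alphabet $\Sigma$ whose accepted trees are exactly the encodings of accepting computations of $M$, so that $\L(\A)\neq\emptyset$ if and only if $M$ accepts. The conceptual key is that the acceptance condition of a PTA is \emph{conjunctive}: a computation is successful only if \emph{every} leaf configuration lies in $C$. This universal quantification over leaves is precisely what lets tree branching play the role of the \emph{and-branching} used in the ACM construction of \cite{LinMitSce92}; a single run of $M$ is simulated along a distinguished spine of the tree, while the branches are used to spawn verification subcomputations.

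Concretely, I would let the states of $\A$ track the states of $M$ together with auxiliary ``checker'' states, and let the three monotone counters record the two counters of $M$ together with one counter for bookkeeping. Instructions of $M$ are simulated step by step along the spine: an $\inc(i)$ transition adds to the corresponding counter component, and a $\dec(i)$ transition is recorded so that its effect can be reconciled against the matching increments. Each zero-test $(q,0(i),q')$ of $M$ is simulated by a binary branching in which one child continues the main simulation at $q'$, while the other child enters a checker subcomputation that may only manipulate the complementary counter and whose leaf is forced, via $C$, to certify that counter $i$ equals $0$ at the branching node -- exactly the device by which \cite{LinMitSce92} replaces zero-tests with forks. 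The semilinear set $C$ is then chosen so that \emph{(i)}~the leaf ending the spine lies in $C$ exactly when $M$ has reached a final state from $Q_f$ with both counters back to $0$, and \emph{(ii)}~each checker leaf lies in $C$ exactly when the corresponding zero-test was legitimate.

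For correctness I would argue both directions. For completeness, an accepting run $(q_0,0,0)\Rightarrow_M^*(q_f,0,0)$ is turned into a successful PTA computation by laying the run out along the spine and attaching, at each zero-test, the appropriate checker subtree; since the run is valid, every leaf configuration lands in $C$. For soundness, I would take a successful computation of $\A$, read off the sequence of simulated instructions along the spine, and use the conjunctive leaf condition -- in particular the checker subtrees -- to show that every simulated zero-test was correct and that the spine indeed witnesses $(q_0,0,0)\Rightarrow_M^*(q_f,0,0)$. Undecidability of non-emptiness for $m$-PTA with $m\geq 3$ then follows, since a $3$-PTA is a special case of an $m$-PTA for every $m\geq 3$ (the extra counter components may simply be left at $0$).

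The hard part will be faithfully encoding the counter semantics of $M$ with the counters of a PTA, which are \emph{increment-only}, are accumulated \emph{pathwise}, and are inspected \emph{only at the leaves}. Decrements, the blocking behaviour of $\dec(i)$ at $0$, and the zero-tests of $M$ cannot be tested mid-computation, so all of this must be pushed into the leaf constraint $C$ and into the branching structure. The crucial mechanism is the one above: the universal ``all leaves in $C$'' acceptance realises the and-branching, and the checker subtrees turn each zero-test into an equality that a semilinear set can enforce. The main technical care needed is to design $C$ and the three-dimensional counter bookkeeping so that the leaf constraints are simultaneously \emph{necessary and sufficient} -- that is, so that no spurious computation of $\A$ (for instance one corresponding to a decrement below $0$, or one taking a zero-test branch while the tested counter is nonzero) can satisfy $C$ at all of its leaves.
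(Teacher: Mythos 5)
Your overall strategy coincides with the paper's: reduce 2CM acceptance (Lemma~\ref{lem:2CM}) to non-emptiness of a $3$-PTA, use the conjunctive ``every leaf in $C$'' acceptance to realise the and-branching of \cite{LinMitSce92}, simulate the run of $M$ along a spine, and spawn checker branches at zero-tests. The three-counter encoding you sketch (two counters for $M$ plus one bookkeeping counter) is also exactly the paper's: a PTA counter configuration $(s_1,s_2,l)$ represents the 2CM configuration $(s_1-l,s_2-l)$, an $\inc(1)$ step adds $(2,1,1)$, a $\dec(1)$ step adds $(0,1,1)$, and $C=\{(i,i,i)\mid i\in\Nat\}$ forces both simulated counters to be $0$ at the spine's leaf.

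The one point your plan leaves underspecified, and where it would fail as literally written, is the blocking behaviour of $\dec(i)$ at $0$. You correctly flag that a spurious computation ``corresponding to a decrement below $0$'' must not satisfy $C$, but your proposed mechanism --- recording decrements so that their effect ``can be reconciled against the matching increments'' --- reads as a reconciliation carried out only at the leaves, and that is not sound: a run that decrements counter $1$ from value $0$ and later increments it again ends with $s_1=l$ and passes every leaf test you describe, yet corresponds to no valid 2CM computation. The paper closes this hole by branching at \emph{every decrement} as well, not only at zero-tests: a $\dec(i)$ transition is simulated by $q\to\sigma(<_i(d),q'(d))$, where the checker state $<_i$ can pad the counters to reach a vector of $C$ precisely when $l\le s_i$ held at the moment of branching, i.e.\ when the simulated counter was positive before the decrement (Observation~\ref{obs1} in the paper). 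Once you add these per-decrement checker branches alongside your zero-test checkers, your construction and your two-directional correctness argument go through essentially as the paper's does.
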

\begin{proof}
    To prove the statement we reduce the acceptance problem of 2CM to the emptiness problem of PTA similar to the proof of \cite[Lemma 3.4]{LinMitSce92}. The idea is to simulate zero-tests with branching: while in the right successor the calculation continues as if the zero-test had been successful, in the left subtree it is checked whether the zero-test is indeed successful. In contrast to 2CM and ACM, PTAs cannot decrement their counters. Therefore, we need 3 counters to represent the counter values of 2CM -- the counter configuration $(s_1,s_2,l)$ of a PTA stands for the value $(s_1-l,s_2-l)$ of the 2CM; each $(j,j,j)$ represents $(0,0)$. In addition, for each decrement of counter $i$ it must be tested that $l$ is smaller than $s_i$, this also happens via branching.

    Given a 2CM $M=(Q,q_0,Q_f,T)$, we construct the 3-PTA $\A$ as follows: Let $\Sigma=\{\sigma^{(2)},\gamma^{(1)},\alpha^{(0)}\}$ be a ranked alphabet and $\A=(Q',\Sigma,q_0,\Delta, C)$ where $Q'=Q\cup\{=_1,=_2,<_1,<_2\}$, $C=\{(i,i,i)\mid i\in\Nat\}$, and $\Delta$ consists of the following transitions:
    \begin{itemize}
        \item for each transition of the form $(q,\inc(i),q')$ in $T$, the transition $q\to\gamma(q'(d))$ is in $\Delta$ where $d=(2,1,1)$ if $i=1$ and $d=(1,2,1)$ if $i=2$,
        \item for each transition of the form $(q,\dec(i),q')$ in $T$, the transition \[q\to\sigma(<_i(d),q'(d))\] is in $\Delta$ where $d=(0,1,1)$ if $i=1$ and $d=(1,0,1)$ if $i=2$,
        \item for each transition of the form \mbox{$(q,0(i),q')$} in $T$, the transition \[q\to\sigma(=_i(0,0,0),q'(0,0,0))\] is in $\Delta$,
        \item for each $q_f\in Q_f$ the transition $q_f\to\alpha$ is in $\Delta$,
        \item for each $d\in\{(0,1,0),(0,0,1),(1,0,1)\}$ the transition $<_1\to\gamma(<_1(d))$ is in $\Delta$ and for each $d'\in\{(1,0,0),(0,0,1),(0,1,1)\}$ the transition $<_2\to\gamma(<_2(d'))$ is in $\Delta$,
        \item for each $d\in\{(1,0,1),(0,1,0)\}$ the transition $=_1\to\gamma(=_1(d))$ is in $\Delta$ and for each $d'\in\{(0,1,1),(1,0,0)\}$ the transition $=_2\to\gamma(=_2(d'))$ is in $\Delta$, and
        \item the transitions $<_i\to \alpha$ and $=_i\to \alpha$ are in $\Delta$ for each $i\in\{1,2\}$.
    \end{itemize}

    We can show that $\L(\A)\neq\emptyset$ if and only if $(q_0,0,0)\Rightarrow^*_M(q_f,0,0)$ for some $q_f\in Q_f$ by induction on the length of the respective computations. For this, we note that the mapping $\varphi\colon T\to\Delta$ given by the above construction is an injection. Moreover, the following three observations are helpful:

\begin{observation}\label{obs1}
    Let $s_1,s_2,l\in\Nat$. Then $(<_i,(s_1,s_2,l))\Rightarrow^*_\A\gamma^n(<_i,(j,j,j))\Rightarrow_\A\gamma^n(\alpha)$ for some $j\in\Nat$ if and only of either
    $i=1$ and $l\leq s_1$ or $i=2$ and $l\leq s_2$.
\end{observation}

\begin{observation}\label{obs2}
    Let $s_1,s_2,l\in\Nat$. Then $(=_i,(s_1,s_2,l))\Rightarrow^*_\A\gamma^n(=_i,(j,j,j))\Rightarrow_\A\gamma^n(\alpha)$ for some $j\in\Nat$ if and only of either
    $i=1$ and $s_1= l$ or $i=2$ and $s_2= l$.
\end{observation}

\begin{observation}\label{obs3}
    Let $q_1,q_2\in Q$, let $s_1,s_2,l,s_1',s_2',l'\in\Nat$, and let $\zeta\in C_\Sigma(X_1)$. If $(q_1,(s_1,s_2,l))\Rightarrow^*_\A\zeta[(q_2,(s_1',s_2',l'))]$, then also $(q_1,(s_1+1,s_2+1,l+1))\Rightarrow^*_\A\zeta[(q_2,(s_1'+1,s_2'+1,l'+1))]$.
\end{observation}
By using Lemma \ref{lem:2CM}, we can conclude that non-emptiness of $m$-PTA for $m\geq 3$ is undecidable.
\end{proof}


Now we come to a case of PTAR for which the situation is different: we can show that for linear PTAR non-emptiness is decidable. To do so, we use the fact that for every non-empty linear PTAR there must be a tree with less than $|Q|+1$ non-reset paths (Lemma \ref{lem:spinal-bounded}) and, thus, reduce the problem to non-emptiness of Parikh string automata, which is decidable (Lemma \ref{pa-dec}).

\begin{definition}
    Let $\A=(Q,\Sigma,q_0,\Delta,C)$ be a linear PTAR, $U\subseteq Q$, and $q\in Q$. The \emph{$(U,q)$-linearization automaton of $\A$} is the PA $\A'=(Q,\Sigma,q,\Delta',F,C)$ where \[F=\{p\mid p\to\alpha\in\Delta,\alpha\in\Sigma^{(0)}\}\cup\{p\mid p\to\sigma(q_1(d_1),\ldots,q_n(d_n))\in\Delta, d_1,\ldots,d_n=\reset, q_1,\ldots,q_n\in U\}\] and $\Delta'$ contains the transition $(p,\sigma,d,p')$ if and only if there is a transition $p\to\sigma(q_1(d_1),\ldots,q_n(d_n))\in\Delta$, $i\in[n]$, such that $q_i=p'$ and $d_i=d\neq\reset$, and $q_1,\ldots,q_{i-1},q_{i+1},\ldots,q_n\in U$.
\end{definition}

\renewcommand{\algorithmicrequire}{\textbf{Input:}}
\renewcommand{\algorithmicensure}{\textbf{Output:}}
\begin{algorithm}
\label{alg:nonemptiness}
\caption{Decision procedure for non-emptiness of linear PTAR}
\begin{algorithmic}
\Require linear PTAR \(\A = (Q,\Sigma,q_0,\Delta,C)\)
\Ensure ``\(\L(\A) \ne \emptyset\)'' if \(\L(\A) \ne \emptyset\), otherwise ``\(\L(\A) = \emptyset\)''.
\State $U_0 \gets\emptyset$, \(i \gets 0\)
\Repeat
  \State \(U_{i+1} \gets U_i\)
  \For{\(q \in Q\)}
    \State Construct the \((U_i, q)\)-linearization automaton \(\A'\) of \(\A\).
    \If{\(\L(\A') \ne \emptyset\)}
      \State \(U_{i+1} \gets U_{i+1} \cup \{q\}\)
    \EndIf
  \EndFor
  \State \(i \gets i + 1\)
\Until{\(U_{i} = U_{i-1}\)}
 \State \textbf{if} {\(q_0 \in U_i\)} \textbf{then} Output ``\(\L(\A) \ne \emptyset\)'' \textbf{else} Output ``\(\L(\A) = \emptyset\)'' \textbf{end if}
\end{algorithmic}
\end{algorithm}

\begin{theorem}\label{thm:lin-empt}
    Given a linear PTAR $\A$, it is decidable whether $\L(\A)\neq\emptyset$.
\end{theorem}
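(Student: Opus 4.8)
The plan is to prove correctness and termination of Algorithm~\ref{alg:nonemptiness}, which computes a fixed point of sets $U_0\subseteq U_1\subseteq\cdots\subseteq Q$ of ``good'' states. The key invariant I would establish is the following: after the algorithm terminates with the set $U=U_i=U_{i-1}$, a state $q$ lies in $U$ if and only if there is a spinal computation tree $d\in D_q$ with $\gentree(d)\in T_\Sigma$, i.e.\ if and only if $(q,\bm 0)\Rightarrow_\A^* \xi$ for some $\xi\in T_\Sigma$ (by Lemma~\ref{lem:semantics}). Granting this invariant, the final answer is immediate: $\L(\A)\neq\emptyset$ exactly when some tree is derivable from $(q_0,\bm 0)$, which by the invariant holds iff $q_0\in U$, and this is precisely what the algorithm outputs.

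The heart of the argument is to connect the $(U,q)$-linearization automaton to spine computations. First I would observe that, by construction, a nonempty accepting run of the PA $\A'=(Q,\Sigma,q,\Delta',F,C)$ corresponds exactly to a spine computation $s\in S_q$ together with the requirement that every ``side state'' branching off the spine lies in $U$ and that the spine terminates acceptably. Concretely, a transition $(p,\sigma,d,p')$ of $\Delta'$ follows the unique non-reset child (tracking the spine and accumulating the counter vector $d$), while the condition $q_1,\dots,q_{i-1},q_{i+1},\dots,q_n\in U$ records that all reset children hang off with states already known to be good; the accepting set $F$ captures that the spine ends either at a leaf transition $p\to\alpha$ or at a node all of whose children are resets into $U$. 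Since $\A'$ starts with counter $\bm 0$ and accepts only when the accumulated vector lies in $C$, by the definition of the PA semantics (which mirrors the pathwise addition and final membership test of a PTAR along a single spine), we get $\L(\A')\neq\emptyset$ iff there is a spine computation $s\in S_q$ all of whose branching states lie in $U$ and whose counter sum is in $C$. By Lemma~\ref{pa-dec}, each such emptiness test is decidable, so every loop iteration is effective.

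With this correspondence in hand I would prove the invariant by the two inclusions of the fixed point. For soundness ($U\subseteq$ good states), I induct on the stage $i$ at which $q$ enters $U$: if $q$ is added at stage $i+1$, then $\L(\A')\neq\emptyset$ for the $(U_i,q)$-linearization automaton, yielding a spine computation $s\in S_q$ whose branching states $q_1,\dots,q_\ell$ all lie in $U_i$; by induction each $q_j$ admits some $d_j\in D_{q_j}$ with $\gentree(d_j)\in T_\Sigma$, and then $s(d_1,\dots,d_\ell)\in D_q$ computes a genuine tree. For completeness (good states $\subseteq U$), I would use Lemma~\ref{lem:spinal-bounded}: any good state $q$ admits a spinal computation tree, which I may take of bounded height, and then induct on the height of a minimal-height witness $d=s(d_1,\dots,d_\ell)$ to show $q$ is eventually added---the children $q_j=\stateseq(s)$ have strictly smaller witnesses, hence enter some $U_{i}$, after which the linearization automaton for $q$ over $U_i$ becomes nonempty so $q$ is added at the next step.

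The main obstacle I anticipate is not the fixed-point bookkeeping but pinning down the linearization correspondence precisely, in particular matching the acceptance condition $F$ and the counter-membership test of the PA to the leaf condition $w\in C$ of the PTAR along a full spine. One must check that the PA correctly models a spine that \emph{ends} in a node whose children are all resets into $U$ (not only spines ending at a genuine leaf $\alpha$), and that the counter vector accumulated by $\A'$ along the spine equals the counter configuration tested at the spine's terminal position in the PTAR; the reset children contribute nothing to this spine's counter and are accounted for solely through membership in $U$. Termination, by contrast, is routine: the sets $U_i$ form an ascending chain of subsets of the finite set $Q$, so the loop stabilizes after at most $\abs{Q}$ iterations, and each iteration performs finitely many decidable emptiness checks.
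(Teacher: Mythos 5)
Your proposal follows the paper's proof essentially step for step: the same fixed-point algorithm, the same correspondence between non-emptiness of the $(U,q)$-linearization automaton and the existence of a spine computation $s\in S_q$ with $\stateseq(s)\in U^*$, the same two-directional induction establishing the loop invariant (the paper phrases it as ``$q\in U_j$ iff there is $d\in D_q$ with $\height(d)\leq j$'', which packages your soundness and completeness inductions into one height-indexed statement), and the same appeals to Lemma~\ref{lem:semantics}, Lemma~\ref{lem:spinal-bounded} and Lemma~\ref{pa-dec}.

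The one substantive remark concerns the ``obstacle'' you flag at the end, and you are right to flag it: it is a genuine issue, and it is present (unaddressed) in the paper's proof as well, which simply asserts the correspondence $(\star)$ with ``Observe that''. The mismatch is exactly where you suspect: a Parikh string automaton accepts only runs whose accumulated vector lies in $C$, uniformly over all final states, whereas a PTAR spine that terminates by an all-reset transition into states of $U$ performs \emph{no} membership test in $C$ at its last node. Consequently the ``if there is a spine computation then $\L(\A')\neq\emptyset$'' direction of $(\star)$ can fail. A concrete instance: take dimension $1$, $C=\{1\}$, states $q_0,p$, and transitions $q_0\to\sigma(p(\reset),p(\reset))$, $p\to\gamma(p(1))$, $p\to\alpha$. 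Then $\L(\A)\neq\emptyset$ (via $\sigma(\gamma\alpha,\gamma\alpha)$) and $p$ enters $U_1$, but the $(\{p\},q_0)$-linearization automaton accepts only the empty run from the final state $q_0$, whose accumulated vector $\bm{0}$ is not in $C$; so $q_0$ is never added and the algorithm would answer ``empty''. The fix is routine but must be made explicit: the $C$-test should only be imposed on runs ending in a final state of the first kind ($p\to\alpha$), e.g.\ by testing two linearization automata per pair $(U,q)$ -- one with final states $\{p\mid p\to\alpha\in\Delta\}$ and constraint $C$, one with the all-reset final states and the trivial constraint $\Nat^m$. With that repair your argument (and the paper's) goes through unchanged, since the rest of the correspondence -- spine transitions matching PA transitions, side children recorded via membership in $U$, the spine's counter equalling the accumulated vector -- is exactly as you describe.
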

\begin{proof}
  Consider Algorithm~1.  We claim that this algorithm is a decision procedure for the non-emptiness problem of linear PTAR.

Observe that, for every \(U \subseteq Q\) and every \(q \in Q\), when \(\A'\) is the \((U,q)\)-linearization automaton of \(\A\), we have
\[\L(\A') \ne \emptyset \qquad \text{iff} \qquad \exists s \in S_q\colon\,\stateseq(s) \in U^*\,\text{.} \tag{\(\star\)}\]

By this property, we can conclude that the following loop invariant holds for the outer loop of the algorithm: for every \(j \in \Nat\) and \(q \in Q\), we have
\[q \in U_j \qquad \text{iff} \qquad \exists d \in D_q \colon\, \height(d) \leq j\,\text{.}\]

The proof of the loop invariant is by induction on \(j\).  The base case \(j = 0\) is vacuously true, so assume the property is proven for some \(j \in \Nat\).  For the direction ``only if'', assume some \(q \in U_{j+1}\).
The case \(q \in U_j\) is already covered by the induction hypothesis, so it remains to consider \(q \in U_{j+1} \setminus U_j\).
Then the language of the \((U_j,q)\)-linearization automaton of \(\A\) is nonempty, hence there is some \(s \in S_q\) with \(\stateseq(s) \in U_j^*\) by (\(\star\)).  Moreover, by the induction hypothesis, there are, for all states \(q_i\) in \(\stateseq(s)\), computation trees \(d_i \in D_{q_i}\) with \(\height(d_i)\leq j\).  So the computation tree \(s(d_1, \ldots, d_n) \in D_q\) has height at most \(j + 1\).

For the direction ``if'', assume that there is some \(d \in D_q\) such that \(\height(d) \leq j+1\). Again, if \(\height(d) < j+1\), we can apply the induction hypothesis and are done immediately.  So consider the case \(\height(d) = j+1\).
Then \(d = s(d_1, \ldots, s_\ell)\) for some \(s \in S_q\), \(\ell \in \Nat\) and \(d_i \in D_{q_i}\) for all \(i \in [\ell]\).  In particular \(\height(d_i) \leq j\), so \(q_i \in U_j\) by the induction hypothesis.
Moreover, by (\(\star\)), the language of the \((U_{j}, q)\)-linearization automaton is nonempty, so we obtain that \(q \in U_{j+1}\).

To show correctness of the algorithm, assume that the algorithm outputs ``\(\L(\A) \ne \emptyset\)''.  Then we have \(q_0 \in U_k\), where \(k\) is the value of the counter \(i\) after the outer loop terminates.  By the loop invariant, there is some \(d \in D_{q_0}\), and by Lemma~\ref{lem:semantics}, \(\L(\A) \ne \emptyset\).

For the proof of completeness of the algorithm, assume that \(\L(\A) \ne \emptyset\). By Lemma~\ref{lem:spinal-bounded}, there is some \(d \in D_{q_0}\) with \(\height(d) \leq \abs{Q}\), and by the loop invariant, this means that \(q_0 \in U_j\) for some \(j \leq \abs{Q}\). Thus, also \(q_0 \in U_k\), where \(k\) is the value of the counter \(i\) after the outer loop terminates.  So the algorithm outputs ``\(\L(\A) \ne \emptyset\)''.
\end{proof}


Finally, we obtain that membership is decidable for arbitrary PTAR.

\begin{theorem}\label{thm:member}
    Given an $m$-PTAR $\A$ over $\Sigma$ and a tree $\xi\in T_\Sigma$, it is decidable whether $\xi\in\L(\A)$.
\end{theorem}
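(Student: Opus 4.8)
The plan is to exploit that the input tree $\xi$ is fixed and finite, so that only finitely many automaton configurations can ever arise in a computation of $\A$ on $\xi$. Decidability then follows from a bottom-up dynamic program over the positions of $\xi$. The crucial point is that, although a PTAR may in general produce arbitrarily large counter values over its whole language, for one fixed tree the path lengths — and hence the reachable counter values — are bounded.

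\emph{Bounding the configurations.} Let $H=\height(\xi)$ and let $B$ be the largest component occurring in any increment vector $d\in\Nat^m$ of a transition of $\A$. In any computation of $\A$ on $\xi$, the counter configuration attached to a position $\rho\in\pos(\xi)$ is the sum of the increments along the path from the root to $\rho$ taken since the last $\reset$ (or from the root, if no $\reset$ occurred). Since this path has at most $H$ edges, each component of the configuration is at most $H\cdot B$, so every reachable configuration lies in the finite set $W=\{0,\ldots,HB\}^m$.

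\emph{The dynamic program.} For every $\rho\in\pos(\xi)$, $q\in Q$ and $w\in W$, I would define the predicate $\mathrm{Acc}(\rho,q,w)$ to hold iff $(q,w)\Rightarrow^*_\A\xi_{|\rho}$, i.e.\ iff the subtree of $\xi$ at $\rho$ admits a successful computation started in the configuration $(q,w)$. This predicate obeys a recursion that can be evaluated bottom-up over $\pos(\xi)$. If $\rho$ is a leaf with $\xi(\rho)=\alpha\in\Sigma^{(0)}$, then $\mathrm{Acc}(\rho,q,w)$ holds iff $q\to\alpha\in\Delta$ and $w\in C$, where the membership test $w\in C$ is decidable by Lemma~\ref{lemma:dec-C}. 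If $\xi(\rho)=\sigma\in\Sigma^{(n)}$ with $n\geq 1$, then $\mathrm{Acc}(\rho,q,w)$ holds iff there is a transition $q\to\sigma(q_1(d_1),\ldots,q_n(d_n))\in\Delta$ such that $\mathrm{Acc}(\rho i,q_i,w_i)$ holds for every $i\in[n]$, where $w_i=w+d_i$ if $d_i\neq\reset$ and $w_i=\bm{0}$ otherwise. Since $\pos(\xi)$, $Q$, $W$ and $\Delta$ are all finite, each table entry is determined by a finite search, so the whole table is computable. Finally I would answer ``$\xi\in\L(\A)$'' exactly when $\mathrm{Acc}(\varepsilon,q_0,\bm{0})$ holds, which by the definition of $\mathrm{Acc}$ and of $\compA{\xi}$ is precisely the condition $\compA{\xi}\neq\emptyset$.

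\emph{The main obstacle.} The only step requiring genuine care is the boundedness claim: it is essential to use that $\xi$ is fixed, since this caps every path length by $H$ and hence every reachable configuration by $HB$; without a fixed tree the configuration space would be infinite. The $\reset$ operation never increases counter values, so it does not threaten this bound and is handled transparently by the $w_i=\bm{0}$ case of the recursion. Correctness of the recursion itself is then a routine induction on $\height(\xi_{|\rho})$ that mirrors the two clauses defining $\Rightarrow_\A$.
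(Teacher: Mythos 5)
Your proof is correct, but it is organized differently from the paper's. The paper's argument is a one-liner: since every successful computation on the fixed tree $\xi$ has length exactly $|\xi|$, one can enumerate all transition sequences $\tau_1\ldots\tau_{|\xi|}$ over the finite set $\Delta$, simulate each in lexicographic order, and at every leaf test membership of the reached configuration in $C$ via Lemma~\ref{lemma:dec-C}. You instead set up a bottom-up dynamic program over triples $(\rho,q,w)$, which requires you to first argue that the reachable configurations lie in a finite set $W$. Two remarks on that. First, the explicit numeric bound $HB$ is not actually needed for decidability: since each node admits only finitely many applicable transitions, the set of configurations reachable at any position of the fixed tree is finite anyway (and in the paper's formulation the configurations are fully determined by the guessed transition sequence, so the issue never arises). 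Your bound does, however, make the finiteness quantitative and yields a table of size $|\pos(\xi)|\cdot|Q|\cdot(HB+1)^m$, i.e.\ a procedure that avoids the $|\Delta|^{|\xi|}$ blow-up of naive enumeration — a genuine gain if one cares about complexity, which the paper explicitly does not pursue. Second, a pedantic point: with $W=\{0,\ldots,HB\}^m$ the value $w+d_i$ queried at a child can exceed $HB$ when $w$ is itself near the top of the range; this is harmless because such $w$ are unreachable at $\rho$ (or one simply enlarges $W$ by one increment, or memoizes top-down from $(\varepsilon,q_0,\bm 0)$), but it is worth stating. The correctness of your recursion rests on the fact that a computation $(q,w)\Rightarrow^*_\A\sigma(\xi_1,\ldots,\xi_n)$ must start with a transition for $\sigma$ and then splits into independent subtree computations; this is exactly the routine induction you indicate, and both your proof and the paper's ultimately reduce to the same two ingredients — finiteness of the computation space on a fixed tree and Lemma~\ref{lemma:dec-C}.
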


\begin{proof}
In order to check whether $\xi\in\L(\A)$, the naive approach is sufficient: Clearly, $\xi\in\L(\A)$ if and only if $\compA{\xi}\neq\emptyset$. As each $t\in\compA{\xi}$ is of length $|\xi|$, we can simply guess a sequence of transitions $\tau_1\ldots\tau_{|\xi|}$ and check whether its application in lexicographic order results in a valid computation. This mainly involves to ensure a valid state behavior and to test for each leaf whether the reached counter configuration is an element of $C$. The latter is decidable due to Lemma \ref{lemma:dec-C}.
\end{proof}

\section{Conclusion}
In this work, we introduced non-global PTA and compared its expressive power with that of GPTA. To do so, we generalized an exchange lemma known from Parikh word automata to GPTA. Furthermore, we investigated the question of decidability of non-emptiness and membership for PTA and linear PTAR.

\paragraph{Future work}
Our investigations in this paper were only a first step and raise many more questions that can be addressed in future work. In particular, we think it worthwhile to further investigate the following questions:
\begin{itemize}
    \item Is it possible to formulate an exchange lemma for (linear) PTA(R)? Since the successful computations of all subtrees of a node depend on the current counter configuration, our attempts to reorder parts of a computation have not been successful so far.
    \item Are PTAR strictly more expressive than PTA?
\end{itemize}
Finally, we did not investigate closure properties for the different models introduced in this work as well as complexities of their non-emptiness and membership problems. We think that a further study could contribute to a more complete picture.

\paragraph{Acknowledgements} We want to thank the reviewers for their insightful and detailed comments, which helped us to improve the paper. In particular, one of the reviewers had a really nice idea how to strengthen the undecidability result from 4-PTA to 3-PTA.

%
%
%
\bibliographystyle{eptcs}
\bibliography{lit}

\newpage
\appendix

\end{document}